%% file: main_arxiv.tex
\documentclass[sigconf,nonacm]{acmart}

\usepackage[table, dvipsnames]{xcolor}

\usepackage{amsthm}
\usepackage{xspace, xfrac}
\usepackage{enumitem}
\usepackage[caption=false]{subfig}

\usepackage{tabularx, multirow}
\newcolumntype{L}{>{\raggedright\arraybackslash}X}
\newcolumntype{C}{>{\centering\arraybackslash}X}
\newcolumntype{R}{>{\raggedleft\arraybackslash}X}

\newcommand{\tablecenter}[1]{\multicolumn{1}{c}{{#1}}}

\newtheorem{theorem}{Theorem}
\newtheorem*{theorem*}{Theorem}
\newtheorem{definition}{Definition}
\newcommand{\eg}{e.g.\@,\xspace}

\newcommand{\NAME}{WHET\xspace}
\newcommand{\BASE}{SHARP8+\xspace}
\newcommand{\ctxt}[1]{{[\!\langle{#1}\rangle\!]}}
\newcommand{\ptxt}[1]{{\langle{#1}\rangle}}
\newcommand{\ring}{{\mathcal{R}_Q}}
\newcommand{\evk}[1]{{\mathsf{evk}_{#1}}}
\newcommand{\efflevel}{{L_\mathrm{eff}}}
\newcommand{\boot}{{\mathtt{Boot}}}

\newcommand{\textevk}{$\evk{}$\xspace}
\newcommand{\textevks}{$\evk{}$s\xspace}
\newcommand{\textefflevel}{$\efflevel{}$\xspace}
\newcommand{\textboot}{$\boot{}$\xspace}

\begin{document}

\title{WHET: Welding Homomorphic Encryption to Accelerator Architectures}

\author{Jongmin Kim}
\orcid{0000-0003-2937-3073}
\authornote{Both authors contributed equally to this research.}
\affiliation{%
  \institution{Seoul National University}
  \city{Seoul}
  \country{South Korea}
}
\email{jongmin.kim@snu.ac.kr}

\author{Hyesung Ji}
\orcid{0009-0009-9288-159X}
\authornotemark[1]
\affiliation{%
  \institution{Seoul National University}
  \city{Seoul}
  \country{South Korea}
}
\email{kevin5188@snu.ac.kr}

\author{Wonseok Choi}
\orcid{0009-0004-0941-4805}
\affiliation{%
  \institution{Seoul National University}
  \city{Seoul}
  \country{South Korea}
}
\email{wonseok.choi@snu.ac.kr}

\author{Hyunah Yu}
\orcid{0009-0008-4949-5688}
\affiliation{%
  \institution{Seoul National University}
  \city{Seoul}
  \country{South Korea}
}
\email{yhyuna@snu.ac.kr}

\author{{Jung Ho} Ahn}
\orcid{0000-0003-1733-1394}
\affiliation{%
  \institution{Seoul National University}
  \city{Seoul}
  \country{South Korea}
}
\email{gajh@snu.ac.kr}

%%%%%% -- PAPER CONTENT STARTS-- %%%%%%%%

\begin{abstract}

Fully homomorphic encryption (FHE) enables computations on encrypted data without decryption, offering strong data privacy at the expense of substantial computational and memory overheads.
Prior efforts have steadily improved FHE performance through cryptographic and algorithmic enhancements or hardware acceleration, yet these two directions have progressed largely in isolation, hindering the full exploitation of available hardware capabilities.
%represented by numerous accelerator architecture proposals.
%However, these two approaches are rather bipartited, losing out on an opportunity to best utilize the hardware.
%, prior work has not fully explored the interplay between FHE algorithms, parameter configurations, and architectural design.
%Although numerous FHE accelerators have been proposed, prior work has not fully explored the interplay between FHE algorithms, parameter configurations, and architectural design.
%This work presents WHET,

This work presents WHET, which introduces memory-centric, architecture-aware optimizations to better align cryptographic and algorithmic constructions with FHE accelerator architectures.
%to develop cryptographic and algorithmic constructions better suited for powerful accelerator architectures through architecture-aware, memory-centric optimizations.
%an algorithm-architecture co-design for efficient FHE acceleration with a focus on memory optimizations.
We identify conventional FHE constructions as major sources of excessive working sets and heavy off-chip memory traffic.
We propose accelerator-specific techniques, including fine-grained coefficient-to-slot transformation, plaintext compression, and intermediate modulus raising, 
% ModRaise, 
to reduce the on-chip data footprint by minimizing temporary ciphertexts and plaintext loads.
%minimize the on-chip data footprint by reducing the number of temporary ciphertexts and the volume of plaintext data loaded from off-chip memory.
With these techniques applied, we observe additional opportunities to improve on-chip memory efficiency; hence, we introduce lightweight architectural refinements, including a special-purpose buffer and functional unit extensions.
%We also explore lightweight architectural refinements to improve on-chip memory efficiency, including a special-purpose buffer and functional unit extensions.
With these optimizations, WHET achieves 1.38--8.74$\times$ per-area performance improvements over state-of-the-art FHE accelerators and the first-ever sub-millisecond CKKS bootstrapping.

%We revisit conventional algorithmic and parameter choices for FHE, which we identify as a cause of bottleneck due to excessive working set size requirements.

\end{abstract}

\maketitle % should come after the abstract

\input{introduction}
\input{background}

\input{problem}

\input{algorithm}

\input{architecture}

\input{evaluation}

\input{related}
\input{conclusion}

\section*{Acknowledgements}
We thank Sehyeon Lee for valuable assistance with the CNN workload evaluation.

%\section*{Acknowledgements}
%This document is derived from previous conferences, in particular ISCA 2019,
%MICRO 2019, ISCA 2020, MICRO 2020, ISCA 2022, HPCA 2022, and ISCA 2024. 

\bibliographystyle{ACM-Reference-Format}
\bibliography{refs_with_url}

\balance
\appendix
\input{appendix}

\end{document}

%% file: introduction.tex
\section{Introduction}
\label{sec:intro}

Homomorphic encryption (HE~\cite{2021-standard}) enables computations on encrypted data without decrypting them, thereby providing true end-to-end data privacy.
HE holds great potential for secure cloud computing, privacy-preserving machine learning (ML), and confidential data analytics, as exemplified by Apple's HE-based private visual search service~\cite{apple-private-search}.
While early HE schemes supported only a limited number of operations, fully homomorphic encryption (FHE) schemes have opened up a new era by introducing bootstrapping (\textboot), which enables unlimited computations on encrypted data~\cite{stoc-2009-gentry-fhe}.
This work focuses on CKKS~\cite{asia-2017-ckks}, one of the most popular FHE schemes.
CKKS is well-suited for ML tasks as it efficiently supports arithmetic on complex or real numbers.

Despite its potential, FHE induces substantial computational and memory overheads, making current FHE applications prohibitively expensive for practical use.
For instance, convolutional neural network (CNN) inference that takes a fraction of a second on CPUs can take up to 1,447 seconds~\cite{asplos-2025-orion} under FHE---over 30,000$\times$ slower (see Table~\ref{tab:eval-orion}).
%up to 1,447 seconds under FHE , showing a slowdown of over 30,000$\times$.

Numerous studies have explored FHE acceleration on readily available hardware, including CPUs~\cite{wahc-2021-hexl, access-2021-demystify}, GPUs~\cite{asplos-2026-cheddar, hpca-2023-tensorfhe, hpca-2025-warpdrive, tches-2021-100x}, and FPGAs~\cite{hpca-2023-fab, hpca-2023-poseidon, hpca-2025-effact}.
%By efficiently exploiting the massive data parallelism inherent in HE operations, GPU implementations achieve particularly high performance, 
%Cheddar~\cite{asplos-2026-cheddar} demonstrated that an implementation with a relatively affordable GPU, such as RTX 5090, can
%outperforming CPU or FPGA solutions~\cite{asplos-2026-cheddar}.
However, conventional hardware designs are not built to maximize the efficiency of modular integer arithmetic required by FHE.
More critically, due to the limited on-chip memory capacity and bandwidth, they become memory-bottlenecked~\cite{hpca-2025-anaheim}.

%andle the large working set size ($>$200MiB, see \S\ref{sec:alg:working-set}) of CKKS \textboot.
%Even the latest NVIDIA GB200 GPU has 126MiB of L2 cache capacity and state-of-the-art GPU implementation (Cheddar~\cite{asplos-2026-cheddar}) takes 476$\mu$s for an HRot, which is 49.6$\times$ slower than our performance target in \S\ref{sec:alg:balance}.
%Recent CPUs deploy up to 512MiB of L3 cache but CPUs have much lower computational throughput compared to GPUs.

\begin{figure}
    \centering
    \subfloat[Level vs. computational cost]{\includegraphics[width=0.478\columnwidth]{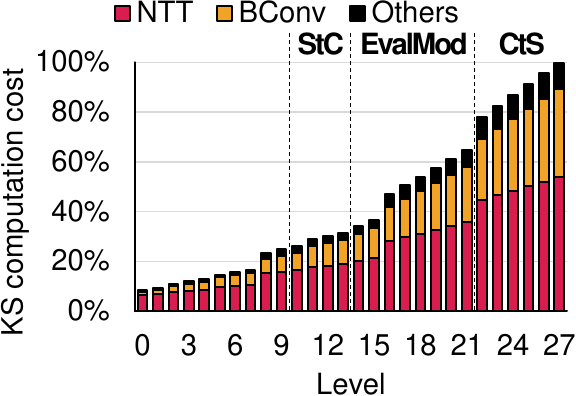}\label{fig:level:compute}\Description{}}
    \;
    \subfloat[Level vs. data size]{\includegraphics[width=0.478\columnwidth]{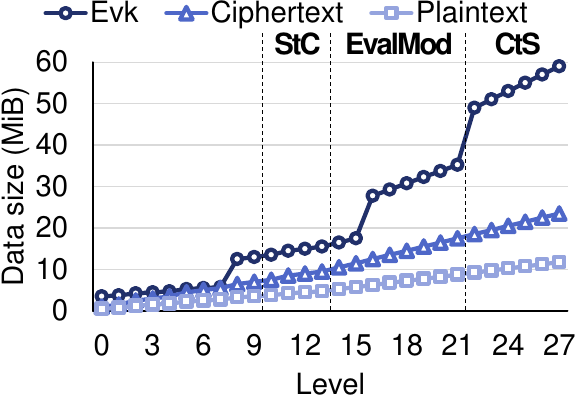}\label{fig:level:evk}\Description{}}
    \caption{(a) $\mathtt{KS}$ computational cost breakdown and (b) data size by level. Computational costs are weighted sums of integer multiplication and modular reduction counts~\cite{wahc-2023-32bit}. A random seed replaces half of an \textevk~\cite{isca-2022-craterlake}. Higher levels are reserved for CtS, EvalMod, and StC, which comprise \textboot.
    }
    \label{fig:level}
\end{figure}

The limitations of existing hardware have motivated numerous custom ASIC accelerator proposals~\cite{isca-2022-bts, micro-2022-ark, isca-2023-sharp, isca-2022-craterlake, isca-2025-fast, micro-2025-hawk, micro-2024-trinity, micro-2024-ufc}.
While their designs differ in detail, they share several common structural characteristics.
First, they aim to maximize the throughput of number-theoretic transform (NTT), which accounts for over half of the total computational cost (see Fig.~\ref{fig:level:compute}).
Second, to mitigate memory bottlenecks in conventional hardware, they dedicate nearly half of their chip area to large on-chip memory (\textgreater200MiB), which significantly restricts architectural efficiency and flexibility.

%To sustain high NTT throughput, these accelerators alleviate
%Recent studies, such as SHARP~\cite{isca-2023-sharp}, Trinity~\cite{micro-2024-trinity}, and HAWK~\cite{micro-2025-hawk},
%have also sought to
%the memory bottlenecks by employing large on-chip memories , which tak
%they all share the common goal of maximizing throughput from the deployed MMAD units.
%This raises a fundamental question: \textbf{how can we efficiently supply data to such a massive number of MMAD units?}

%To answer this, we analyze how effectively an FHE accelerator utilizes its on-chip and off-chip memories.
In this paper, \emph{we question whether existing accelerators effectively capitalize on their substantial hardware resources.}
Our analysis (\S\ref{sec:motivation}) reveals that large working sets stemming from conventional cryptographic and algorithmic constructions limit effective utilization of accelerator hardware.
We focus on the coefficient-to-slot transformation (CtS), a major contributor to FHE runtime.
CtS operates at the highest levels where data object sizes are largest (see Fig.~\ref{fig:level:evk}), whereas lower levels are largely free from working-set constraints.
Well-known techniques such as the baby-step giant-step (BSGS) algorithm~\cite{crypto-2018-linear, eurocrypt-2021-efficient} and CtS matrix decomposition~\cite{access-2019-dft} reduce the computational complexity of CtS.
However, their use in prior FHE accelerators largely follows cryptographic and algorithmic conventions without accounting for hardware capabilities.
As a result, their constructions inflate the CtS working set, forcing accelerators to either provision even larger on-chip memory or suffer from off-chip memory bandwidth bottlenecks, leaving both on-chip memory bandwidth and computational units severely underutilized.

Motivated by these challenges, we introduce \textbf{\NAME}, a set of cryptographic and algorithmic techniques tailored for accelerator execution (\S\ref{sec:alg}).
%arising from large working set sizes.
%Targeting the most memory-intensive step in CKKS \textboot called coefficient-to-slot transformation (CtS)~\cite{eurocrypt-2018-heaanboot},
First, we develop fine-grained CtS (fg-CtS), which decomposes CtS into many levels and avoids the use of BSGS.
This creates only a small number of temporary ciphertexts (up to 24MiB each) and thereby significantly reduces the working set.
%Thus, we analyze the tasks performed at the highest levels in CKKS---the coefficient-to-slot transformation (CtS)~\cite{eurocrypt-2018-heaanboot}---and
%develop an efficient CtS execution method to perform CtS while holding as few data elements on-chip as possible.
%We refer to this technique as .
%
Second, we discover a repetitive structure in CtS plaintexts, exploiting it to compress the plaintexts by up to 8,192$\times$.
Finally, we introduce intermediate ModRaise, which complements these two techniques by reducing the level consumption of fg-CtS while shrinking the size of non-compressible CtS plaintexts through modulus reduction.
%
%We also introduce algorithms to reduce the size of plaintexts through compression (plaintext compression, \S\ref{sec:alg:compress}) and modulus reduction (intermediate ModRaise, \S\ref{sec:alg:intmd}).
Together, these techniques preserve limited on-chip memory capacity and significantly reduce off-chip memory traffic while maintaining 128-bit security guarantees~\cite{jmc-2025-lattice-estimator}.

%even with compact on-chip memory.
%(\eg 128+32+18+6MiB in \NAME).

%Motivated by these new opportunities, we propose \NAME, an algorithm-architecture co-design that integrates our algorithmic and parameter optimizations to efficiently utilize off-chip memory bandwidth and introduces subtle architectural modifications to boost the on-chip memory bandwidth.
%Motivated by these opportunities, we propose \NAME, an algorithm-architecture co-design that integrates our algorithmic and parameter optimizations with lightweight architectural enhancements to improve both off-chip and on-chip memory bandwidth efficiency.

As part of \NAME, we also introduce two lightweight architectural modifications to improve on-chip resource utilization (\S\ref{sec:arch}).
These modifications are motivated by our analysis of the performance limits of accelerator architectures under \NAME's cryptographic and algorithmic enhancements.
%through better understanding of the performance limits of accelerator architectures under the cryptographic and algorithmic enhancements of \NAME.
%The \NAME techniques also allow us to better understand the performance limits of accelerator architectures and uncover detailed bottlenecks in maximizing NTT throughput.
First, we add a special-purpose buffer that stores frequently accessed intermediate results produced by evaluation key multiplication (KeyMult) to mitigate memory bandwidth limitations.
Second, we introduce instruction extensions for the functional unit responsible for element-wise operations to alleviate data dependency stalls.
These enhancements also provide useful insights for future accelerator designs.

We evaluate \NAME on \BASE, built upon SHARP~\cite{isca-2023-sharp} with key improvements from recent architectural efforts (\S\ref{sec:arch:base}).
\NAME achieves 2.01$\times$ mean speedup and 2.29$\times$ mean energy-delay-product improvement over the strong \BASE baseline while using smaller on-chip memory capacity and chip area.
\NAME delivers 1.38--8.74$\times$ higher performance per area than state-of-the-art FHE ASIC accelerators.
Highlights of our results include the first sub-millisecond CKKS full-slot ($2^{15}$ complex numbers) \textboot and encrypted CNN inference latency of 33.1--239ms, enabling practical real-time encrypted inference compared to the original 301--1,447s runtime~\cite{asplos-2025-orion} on a single-threaded CPU (see Table~\ref{tab:eval-orion}).
%These underline the importance of hardware acceleration for transforming FHE from a theoretical construct into a viable solution for real-time, privacy-preserving computation.

The key contributions of \NAME are summarized below:
\begin{itemize}[leftmargin=*, nolistsep, noitemsep]
    \item We propose \emph{fine-grained CtS (fg-CtS)}. By aggressively decomposing CtS and avoiding the baby-step giant-step algorithm (BSGS), fg-CtS significantly reduces the CtS working set, enabling efficient accelerator execution.
    \item We introduce \emph{plaintext compression} and \emph{intermediate ModRaise} algorithms that reduce the size of plaintexts and the amount of off-chip memory traffic.
    \item We design lightweight architectural enhancements, including a special-purpose buffer and functional unit instruction extensions, to improve on-chip resource utilization.
\end{itemize}

%% file: background.tex
\setlength{\tabcolsep}{3pt}
\begin{table}[t]
    \centering
    \small
    \caption{Evaluation of various CNN workloads based on the implementation of Orion~\cite{asplos-2025-orion}. SiLU activations are used.}
    \label{tab:eval-orion}
    \begin{tabular}{r|rrr|rr}
\toprule
  \multirow{2}{*}{Model} & \multicolumn{1}{c}{\BASE} & \multicolumn{1}{c}{\NAME} & \multicolumn{1}{c|}{\NAME} & \multicolumn{2}{c}{CPU (1-thread)} \\
  & \multicolumn{1}{c}{(180MiB)} & \multicolumn{1}{c}{(128MiB)} & \multicolumn{1}{c|}{(180MiB)} & Orion\textsuperscript{$\dagger$} & Plain\textsuperscript{$\ddagger$}\\
\midrule
  VGG-16 & 92.6ms & 94.9ms & 76.6ms &  397s & 11.9ms\\
  (CIFAR-10) & (5.32J)& (5.16J)& (4.68J)\\
\midrule
  ResNet-20 & 50.8ms & 33.1ms & 28.8ms&  301s & 2.30ms\\
  (CIFAR-10) & (3.93J)& (3.15J)& (3.18J)\\
\midrule
  MobileNet & 159ms & 98.5ms&  87.0ms & 892s & 3.12ms \\
  (Tiny ImNet) & (12.2J) & (9.55J)& (9.45J)\\
\midrule
       ResNet-18 & 304ms & 239ms& 216ms& 1447s & 45.6ms\\
       (Tiny ImNet) & (21.9J)& (18.7J)& (18.4J)\\
\bottomrule
       %ResNet-34 & Time & 1.19s &1.18s & 14338s & 77.3ms\\
       %(ImageNet) &  Enrg & & & - & -\\
       %\bottomrule
    \end{tabular}
    {
    \footnotesize
    \begin{itemize}[leftmargin=*]
    \item[$\dagger$] Reported from the Orion paper~\cite{asplos-2025-orion} on Intel Xeon Platinum 8581C.
    \item[$\ddagger$] Unencrypted inference with PyTorch 2~\cite{asplos-2024-pytorch2.0} on Intel Xeon Platinum 8358.
    % , with $\texttt{torch.compile}$ applied when it improved performance.
    \end{itemize}
    }
    %\vspace{-0.15in}
\end{table}
\setlength{\tabcolsep}{6pt}

\section{FHE Basics}
\label{sec:back}

%Polynomials are denoted by plain lowercase letters (e.g., $a$).
%Constants are denoted by uppercase letters (e.g., $N$) or Greek letters (e.g., $\beta$), except for iterators (e.g., $i$). 
%Vectors (e.g., $\mathbf{u}$) and matrices (e.g., $\mathbf{M}$) are written in boldface. 
%All vectors are column vectors, and $\mathbf{u}[i]$ denotes the $i$-th element of $\mathbf{u}$, with $i$ interpreted modulo the vector length.

%HE enables servers to compute on clients' ciphertexts without decryption.
%In a typical client-server setting, the server receives ciphertexts from clients, decrypts them, and performs computations.
Homomorphic encryption (HE) enables servers to perform computations directly on clients' ciphertexts without decryption.
We introduce CKKS~\cite{asia-2017-ckks, sac-2018-frns-ckks, rsa-2020-better}, which offers the highest throughput among HE scheme~\cite{iacr-2023-demystify-boot} and supports fixed-point arithmetic, making it well-suited for numerous real-world tasks.
Notations and symbols are summarized in Appendix~\ref{app:symbol}.

\subsection{Basic CKKS function evaluations}
\label{sec:back:he}

CKKS~\cite{asia-2017-ckks} operates on polynomials, denoted in plain lowercase (\eg $a$ or $a(\mathcal{X})$ with the variable $\mathcal{X}$ specified).
A client encodes a complex vector $\mathbf{u} \in \mathbb{C}^{N/2}$ into a \emph{plaintext} polynomial $u=\ptxt{\mathbf{u}}$,
%in a ring $\ring=\mathbb{Z}_Q[\mathcal{X}]/(\mathcal{X}^N+1)$
%using a special discrete Fourier transform~\cite{eurocrypt-2018-heaanboot},
then encrypts it into a \emph{ciphertext} $\ctxt{\mathbf{u}}$, which comprises a pair or polynomials.
%in $\ring$.
%: $\ctxt{\mathbf{u}}=(a, b)\in\mathcal{R}_Q^2$.
%This encoding $\mathbf{u} \mapsto u$ is an isomorphism, meaning that addition and multiplication (mult) of polynomials correspond respectively to addition and element-wise mult of the encoded vectors.

%The goal of HE is to enable a server to evaluate a function $\ctxt{\mathtt{F}(\mathbf{u}, \mathbf{v},\cdots)}$ for ciphertexts $\ctxt{\mathbf{u}}, \ctxt{\mathbf{v}}, \cdots$ provided by the client.
%as encryptions of $u_0, u_1, \cdots$.
%In CKKS~\cite{asia-2017-ckks}, a ciphertext is composed of a pair of polynomials 
%and functions can be evaluated via polynomial operations.

%The server can evaluate the following basic functions with :
The following are server-side function evaluations of additions, element-wise multiplications ($\odot$), and cyclic rotations ($\ll$) on vectors encrypted as $\ctxt{\mathbf{u}}=(a,b), \ctxt{\mathbf{v}}=(a',b')$:
%\in\mathcal{R}_Q^2$:
%
\begin{itemize}[leftmargin=*]
    \item $\mathtt{HAdd}$: $\ctxt{\mathbf{u} + \mathbf{v}} \gets (a+a', b+b')$
    \item $\mathtt{PMult}$: $\ctxt{\mathbf{p}\odot \mathbf{u}} \gets (p \cdot a, p \cdot b)$ for plaintext $p=\ptxt{\mathbf{p}}$
    \item $\mathtt{HMult}$: $\ctxt{\mathbf{u}\odot \mathbf{v}} \gets (a\cdot b' + a' \cdot b, b\cdot b')+\mathtt{KS}(a\cdot a', \evk{\odot})$
    \item $\mathtt{HRot}$: $\ctxt{\mathbf{u} \ll R} \gets (0, b(\mathcal{X}^{5^R})) + \mathtt{KS}(a(\mathcal{X}^{5^R}), \evk{R})$
\end{itemize}

Here, $\mathtt{KS}$ (key-switching) dominates CKKS computations, involving dozens of polynomial operations~\cite{tches-2021-100x}.
%is a process involving dozens of polynomial operations, accounting for most of the computations in CKKS
$\mathtt{KS}$ requires client-provided evaluation keys (\textevks), each comprising $2\times \beta$ polynomials (\eg $\beta = 4$)~\cite{rsa-2020-better}.
%prior studies have extensively accelerated $\mathtt{KS}$ (\eg HAWK~\cite{micro-2025-hawk}).
%$\evk{s'}$ is provided by the client as a special encryption of a secret polynomial $s^\prime$.
%CKKS needs various \textevks,
$\mathtt{HMult}$ requires an $\evk{\odot}$, and each $\mathtt{HRot}$ with a different distance $R$ requires a distinct $\evk{R}$.
%For $\mathtt{HRot}$, we need a separate \textevk for each unique $s'(\mathcal{X})=s(\mathcal{X}^i)$.

%In this work, we refer to prior work~\cite{eurocrypt-2021-efficient, rsa-2020-better, asia-2017-ckks} for a detailed explanation of these function evaluations and $\mathtt{KS}$.
%We instead focus on basic polynomial operations composing them.

\subsection{Polynomial operations}
\label{sec:back:poly-ops}

We focus on polynomial operations constituting CKKS function evaluations.
The main polynomial domain is a ring $\ring = \mathbb{Z}_Q[X]/(X^N+1)$.
A polynomial in $\ring$ has $N$ (typically $2^{16}$) integer coefficients modulo huge $Q$, reaching $2^{1747}$ under standard 128-bit security constraints~\cite{iacr-2024-guideline}.

Modern HE uses the residue number system (RNS) to handle $Q$ efficiently~\cite{sac-2018-frns-ckks}.
$L$ number of small (\eg $<2^{32}$) prime sub-moduli $\mathcal{Q}_0,\cdots, \mathcal{Q}_{L-1}$ satisfying $Q=\prod_{i=0}^{L-1}\mathcal{Q}_i$ are chosen.
For a polynomial $a\in\ring$, we obtain $(a_{\mathcal{Q}_0}, \cdots, a_{\mathcal{Q}_{L-1}})$ by taking the residues modulo $\mathcal{Q}_i$ for each coefficient of $a$.
%Formally,
%\begin{equation*}
%    \text{RNS: } \mathcal{R}_Q \ni a \mapsto (a_{\mathcal{Q}_0}, \cdots, a_{\mathcal{Q}_{L-1}}) \in \mathcal{R}_{\mathcal{Q}_0}\times\cdots\times\mathcal{R}_{\mathcal{Q}_{L-1}}\text{.}
%\end{equation*}
%

Each $a_{\mathcal{Q}_i}$ is referred to as a \emph{limb}~\cite{micro-2023-mad} of $a$.
A polynomial $a\in\mathcal{R}_Q$ with $L$ limbs forms an $L\times N$ matrix, where the $i$-th row corresponds to $a_{\mathcal{Q}_i}$.
With RNS, polynomial additions and multiplications are performed per limb, leveraging hardware-friendly small-integer (\eg int32) arithmetic.

There are four types of polynomial operations:

\begin{itemize}[leftmargin=*, nolistsep, noitemsep]
%\noindent\textbf{1. Number-theoretic transform (NTT):}
\item \textbf{Number-theoretic transform (NTT)}.
%
%To reduce the cost of polynomial multiplications, a variant of Fourier transform called NTT is utilized.
Multiplying $a_{\mathcal{Q}_i}$ with $b_{\mathcal{Q}_i}$ requires a negacyclic convolution of their coefficients ($\mathcal{O}(N^2)$).
NTT uses fast Fourier transform (FFT)~\cite{cooley-tukey} ($\mathcal{O}(N\log N)$) to convert this into an element-wise multiplication ($\mathcal{O}(N)$): $\mathrm{NTT}(a_{\mathcal{Q}_i}\cdot b_{\mathcal{Q}_i})=\mathrm{NTT}(a_{\mathcal{Q}_i})\odot\mathrm{NTT}(b_{\mathcal{Q}_i})$.
%NTT itself has $\mathcal{O}(N\log N)$ complexity using the fast Fourier transform (FFT) algorithm~\cite{cooley-tukey}.
We apply NTT per limb and keep polynomials in NTT form by default.
%For the entire polynomial, we perform NTT for each of its limbs and keep it in this form by default.
NTT dominates HE computations, accounting for over half the costs as shown in Fig.~\ref{fig:level:compute}.

\item \textbf{Base conversion (BConv)}.
%\noindent\textbf{2. Base conversion (BConv):}
%
%We frequently need to change the modulus of a polynomial (\eg from $Q=\prod_{i=0}^{L-1} \mathcal{Q}_i$ to $P=\prod_{i=0}^{K - 1} \mathcal{P}_i$) due to differing modulus.
%It was identified in ARK that
BConv changes the polynomial modulus through a matrix-matrix multiplication~\cite{micro-2022-ark}.
A polynomial $a\in\mathcal{R}_Q$ with $L$ limbs ($L\times N$ matrix)
is multiplied
%(each row is a limb $a_{\mathcal{Q}_i}$).
%matrix with each $i$-th row corresponding to the limb $a_{\mathcal{Q}_i}$.
%BConv multiplies this
by an $L'\times L$ matrix of precomputed constants,
yielding $L'$ output limbs ($L'\times N$ matrix).
%with the $L\times N$ matrix.

%BConv cannot be performed when NTT has been applied to $u$.
%Thus, we can observe a common computational pattern: inverse NTT (INTT) on $L$ input limbs $\rightarrow$ $K\times L$ BConv $\rightarrow$ NTT on $K$ output limbs~\cite{hpca-2025-anaheim}.

%$\mathbf{B} \cdot \mathbf{U}$, where $\mathbf{B}$ is a $K \times L$ matrix with entries $\mathbf{B}_{i,j} = (Q / Q_j) \bmod P_j$, and $\mathbf{U}$ represents a polynomial as an $L \times N$ matrix whose entry $\mathbf{U}_{j,k}$ is the $k$-th coefficient of the $j$-th limb $u_{Q_j}$.

\item \textbf{Automorphism}.
%\noindent\textbf{3. Automorphism:}
%
Automorphism maps $a(\mathcal{X})$ to $a(\mathcal{X}^{5^R})$, inducing a permutation of the coefficients of $a$.
%It is only performed during $\mathtt{HRot}$.

\item \textbf{Element-wise operations}.
%\noindent\textbf{4. Element-wise operations:}
%
The remaining operations, such as $a \pm b$ and $a \cdot b$ (excluding NTT), are inherently element-wise over the $L\times N$ elements in a polynomial.
\end{itemize}

\begin{comment}
\subsection{Parallel processing of polynomial operations}

%\noindent\textbf{Element-wise operations and parallelization:}
%
%The remaining polynomial operations, such as addition and element-wise multiplication ($\odot$), are inherently element-wise and hence easily parallelizable.
%In contrast, parallelizing 
Also, BConv can be partitioned across the columns of the $L\times N$ matrix (polynomial), whereas NTT and automorphism can be trivially partitioned across its rows (limbs).
However, no single parallelization strategy can satisfy all the intersecting data dependencies without data transfer between the processing units (lanes)~\cite{isca-2022-bts}.
%, which can be partitioned across the columns of the $L\times N$ matrix (polynomial), and (I)NTT and automorphism, which can be trivially partitioned across the rows (limbs), is more challenging.
%No single parallelization strategy can satisfy all these intersecting data access patterns.

Nonetheless, to exploit the high degree of parallelism coming from the $N=2^{16}$ columns, we adopt the parallelization strategy of CraterLake~\cite{isca-2022-craterlake}, where the $N$ columns are evenly distributed across numerous lanes.
While this strategy necessitates all-to-all communication between the lanes for NTT and automorphism, we exploit efficient hardware solutions from prior work~\cite{isca-2022-craterlake, micro-2022-ark} to mitigate their costs.
%to lower the costs.
%, which have been extensively studied in prior work~\cite{isca-2022-craterlake, micro-2022-ark}.

\end{comment}

\subsection{Bootstrapping: The defining feature of FHE}
\label{sec:back:boot}

High-throughput HE schemes are \emph{leveled}.
Applications start with a large modulus (\eg $Q_\mathrm{top}\simeq2^{1400}$), but multiplicative functions, such as $\mathtt{HMult}$ and $\mathtt{PMult}$, require modulus reductions to control error growth~\cite{asia-2017-ckks}.
Each reduced $Q$ corresponds to a unique level from $\mathrm{top}$ ($Q_\mathrm{top}$) to 0 (\eg $Q_\mathrm{bot}\simeq2^{50}$).
At level 0, we cannot perform more function evaluations.

%For instance, when the modulus decreases from $Q_\mathrm{top}$ to $Q_\mathrm{top}/\Delta$ after $\mathtt{HMult}$, we represent it with a level reduction from $(\mathrm{top})$ to $(\mathrm{top}-1)$. 
%Eventually, the modulus will reach $Q_\mathrm{bot}$ (level 0), where we cannot evaluate more functions.

%The modulus (and the level) decreases over HE function evaluations, which is necessary to suppress the inherent error growth in HE, until it reaches $Q_\text{bot}$ (level 0).

%At level 0, before continuing further function evaluations,
%Fully homomorphic encryption (FHE)

Bootstrapping (\textboot) is the defining feature of fully HE (FHE), a class of HE schemes with unlimited function evaluation capabilities. 
\textboot restores the modulus (and the level) of a ciphertext at level 0 to continue function evaluations.
%on a ciphertext without decrypting it.
%At level 0, the modulus (and the level) needs to be restored through bootstrapping (\textboot).
%An HE scheme with \textboot capability is referred to as a \emph{fully homomorphic encryption (FHE)} scheme.

CKKS \textboot comprises four sequential steps~\cite{eurocrypt-2018-heaanboot, eurocrypt-2021-efficient, acns-2022-sparseboot, rsa-2020-better}:
\begin{enumerate}[leftmargin=*]
    \item \textbf{ModRaise}. Raising the modulus to $Q_\mathrm{top}$ from $Q_\mathrm{bot}$, introducing unwanted $Q_\mathrm{bot}$ multiples into the encrypted data, which subsequent steps remove.
    \item \textbf{CtS}. Coefficient-to-slot transformation, which performs an encrypted linear transform evaluation multiplying the $\mathrm{top}$-level ciphertext with an $\frac{N}{2}\times\frac{N}{2}$ matrix.
    \item \textbf{EvalMod}. Evaluating an approximate mod-$Q_\mathrm{bot}$ function, which primarily requires $\mathtt{HMult}$ evaluations.
    \item \textbf{StC}. Slot-to-coefficient transformation, which is another encrypted linear transform evaluation at lower levels.
\end{enumerate}
%
%We refer to prior work for a more detailed explanation of each step.

%modulus raising (ModRaise), coefficient-to-slot transformation (CtS), mod-$Q_\mathrm{bot}$ evaluation (EvalMod), and slot-to-coefficient transformation (StC).
%First, for a ciphertext at level 0 with the modulus $Q_\mathrm{bot}$, ModRaise simply starts using the top-level modulus $Q_\mathrm{top}$ instead.
%This exposes hidden multiples of $Q_\mathrm{bot}$ in the ciphertext that were previously removed when using the modulus $Q_\mathrm{bot}$.
%(analogous to: $3 + 4 = 2 \bmod 5$ but $3 + 4 = 2 + 5 \bmod 625$).
%The remaining steps eliminate the multiples of $Q_{\mathrm{bot}}$, the details of which~\cite{eurocrypt-2018-heaanboot, eurocrypt-2021-efficient, rsa-2020-better, acns-2022-sparseboot, asia-2025-subring} are beyond the scope of this work.

%We point out that CtS, which is performed at the highest levels, is the most time-consuming step in \textboot~\cite{micro-2022-ark}.
%CtS is composed of linear transform evaluations on a ciphertext, each of which involves dozens of $\mathtt{HRot}$ and $\mathtt{PMult}$ evaluations.
%StC is a similar step, but it is performed at lower levels.
%As shown in Fig.~\ref{fig:level}, the same $\mathtt{KS}$ involves less computation and fewer memory loads at lower levels.
%through its mod-$Q_\mathrm{bot}$ operations.
%\textboot itself consumes dozens of levels.

The number of remaining levels after \textboot is referred to as the effective level (\textefflevel)~\cite{isca-2023-sharp}.
Increasing \textefflevel may reduce the frequency of \textboot.
%\textefflevel determines the frequency of \textboot, being a critical factor for assessing FHE performance.
%Therefore, we use \textboot time per \textefflevel ($T_\boot / \efflevel$) as a key performance metric.

%% file: problem.tex
\section{The Working Set Problem in CtS}
\label{sec:motivation}

\subsection{Overview}
\label{sec:motivation:overview}

CtS at the highest levels is the most computationally demanding step of \textboot, accounting for 60.3\% of total \textboot time on a conventional accelerator configuration (\BASE in Fig.~\ref{fig:sensitivity:boot}).
As \textboot is a vital FHE functionality, usually dominating workload execution times~\cite{micro-2022-ark, icml-2022-resnet, ccs-2024-neujeans}, inefficiencies in CtS represent a fundamental performance bottleneck.

We discover that \textbf{the excessive working set of conventional CtS becomes a critical hurdle for modern FHE accelerators}.
Prior work has hinted at this issue.
SHARP~\cite{isca-2023-sharp} observes that \textit{``we have to look out for the working set size only at high levels---that is, only while bootstrapping.''}
The working set grows superlinearly with the level, primarily due to the increase in \textevk size, which reaches up to 60MiB as shown in Fig.~\ref{fig:level:evk}.
Consequently, while lower levels have modest working sets, CtS can require hundreds of MBs.
%Even custom accelerators fail to retain all necessary data objects on-chip for CtS, which is performed at the highest levels.

%CtS, performed at the highest levels, cannot retain all necessary data objects on-chip.
%, even with hundreds of MBs of memory.

%As we attempt to improve hardware efficiency by reducing on-chip memory capacity, this fraction increases sharply as insufficient memory prevents effective data reuse during CtS.

This constraint forces architects to provision hundreds of MBs of on-chip memory to enable effective data reuse in CtS.
%even though lower levels do not require such capacity.
In turn, accelerators become heavily memory-dominated and limited in architectural flexibility.
For example, ARK~\cite{micro-2022-ark} allocates 54.8\% of its chip area to 588MiB of on-chip memory, SHARP~\cite{isca-2023-sharp} dedicates 48.9\% to approximately 200MiB, and the more recent FAST~\cite{isca-2025-fast} devotes 43.7\% to 281MiB.

Although the \textboot fraction varies across workloads (35.1--97.1\% in our evaluation), this architectural burden imposed by CtS is fundamental.
Even workloads with modest \textboot proportions suffer from substantial performance degradations and energy overheads when on-chip memory capacity is reduced from our 200MiB baseline (\BASE), as the large CtS working set induces frequent off-chip memory accesses.

%This constraint impacts not only workloads dominated by \textboot, but also those with relatively modest \textboot fractions.
%On SHARP8+ with over 200MiB of on-chip memory, the workloads we evaluate (including Orion~\cite{asplos-2025-orion}) spend 35.1--97.1\% of total execution time in \textboot.
%Consequently, architects are effectively forced to provision hundreds of MBs of on-chip memory to accommodate CtS, 

These observations motivate a deeper examination of conventional CtS configurations to identify the root causes of this inefficiency and uncover opportunities to overcome it.

%Prior studies have also recognized this issue, yet failed to fully address it.
%For example, 
%However, our later analyses with an enhanced SHARP reproduction (\BASE, \S\ref{sec:arch:base}) reveal that every FU in \BASE shows less than 36.2\% of utilization (see Fig.~\ref{fig:sensitivity:util}) with over 70\% of the idle cycles caused by off-chip memory stalls (see Fig.~\ref{fig:nttu-stall-breakdown}).
%This indicates that even SHARP’s large 180+18MiB on-chip memory is insufficient to overcome the working set problem.

\begin{figure*}[tb!]
    \centering
    \includegraphics[width=0.98\textwidth]{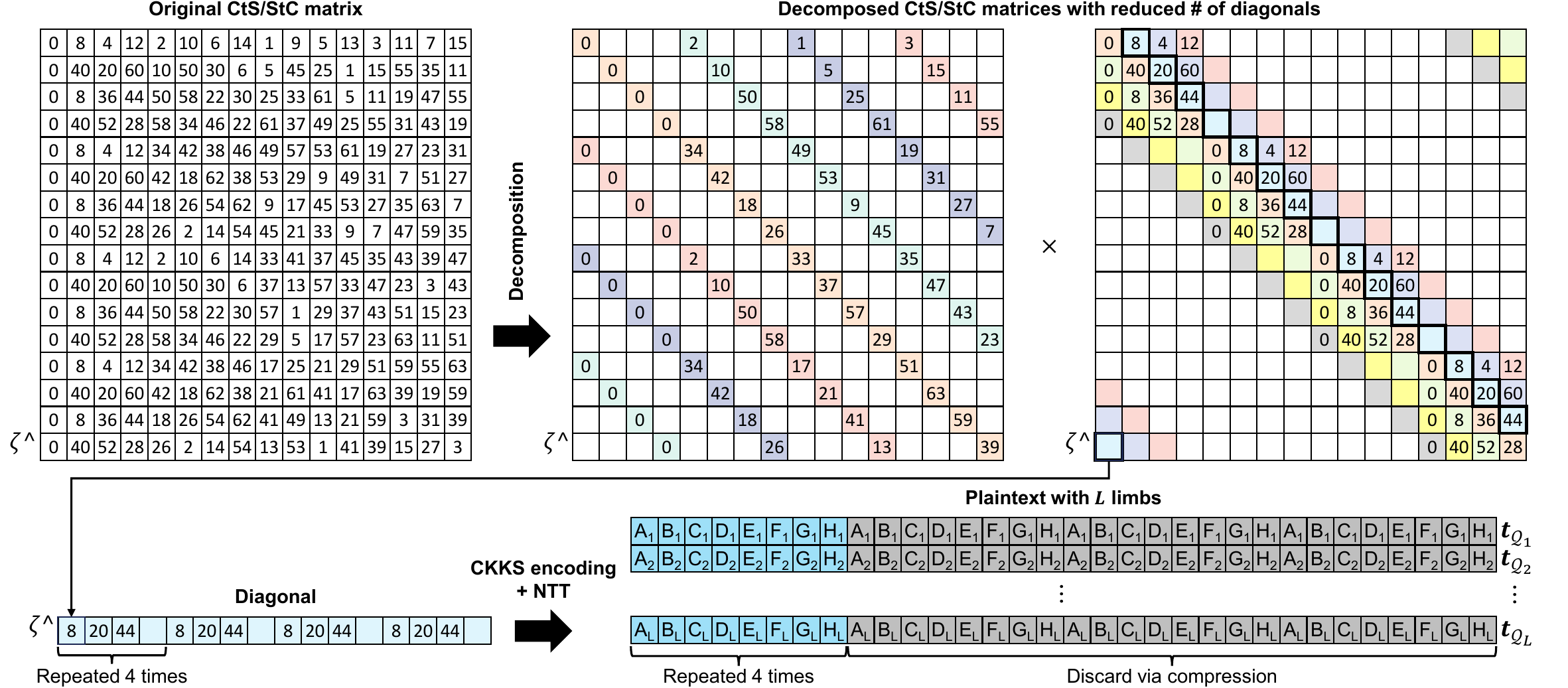}
    \Description{}
    \caption{CtS/StC matrix decomposition~\cite{access-2019-dft} and plaintext compression simplified for $N=32$. The original dense CtS/StC matrix, composed of powers of $\zeta=e^{\pi i/N}$, is decomposed into two sparse matrices with reduced \#diag (16 $\rightarrow$ 4 \& 7). Empty slots in each matrix represent zero values. The rightmost matrix contains diagonals with repetitions, allowing plaintext compression.}
    \label{fig:cts-optimizations}
\end{figure*}

\subsection{CtS optimizations}
\label{sec:motivation:lt}

We start from a simple linear transform example that evaluates
%We describe how linear transforms are performed in HE~\cite{crypto-2018-linear} using an example of
%evaluating
$\ctxt{\mathbf{M}\cdot\mathbf{u}}$ with a ciphertext $\ctxt{\mathbf{u}}$ and a $4\times4$ unencrypted matrix $\mathbf{M}$ shown below.
First, we extract the ``diagonals'' of $\mathbf{M}$ and encode them as plaintexts $\ptxt{{\color{red!80!black}\mathbf{d_0}}},\ptxt{{\color{green!80!black}\mathbf{d_1}}},\ptxt{{\color{blue!80!black}\mathbf{d_2}}},\ptxt{{\color{yellow!80!black}\mathbf{d_3}}}\in\ring$.
\[
\left[\begin{array}{cccc}
    \cellcolor{red!20}0 & \cellcolor{green!20}1 & \cellcolor{blue!20}2 & \cellcolor{yellow!20}3\\
    \cellcolor{yellow!20}4 & \cellcolor{red!20}5 & \cellcolor{green!20}6 & \cellcolor{blue!20}7\\
    \cellcolor{blue!20}8 & \cellcolor{yellow!20}9 & \cellcolor{red!20}10 & \cellcolor{green!20}11\\
    \cellcolor{green!20}12 & \cellcolor{blue!20}13 & \cellcolor{yellow!20}14 & \cellcolor{red!20}15
\end{array}\right]\rightarrow\left[\begin{array}{c}\cellcolor{red!20}0\\ \cellcolor{red!20}5\\ \cellcolor{red!20}10\\ \cellcolor{red!20}15\end{array}\right]\!\left[\begin{array}{c}\cellcolor{green!20}1\\ \cellcolor{green!20}6\\ \cellcolor{green!20}11\\ \cellcolor{green!20}12\end{array}\right]\!\left[\begin{array}{c}\cellcolor{blue!20}2\\ \cellcolor{blue!20}7\\ \cellcolor{blue!20}8\\ \cellcolor{blue!20}13\end{array}\right]\!\left[\begin{array}{c}\cellcolor{yellow!20}3\\ \cellcolor{yellow!20}4\\ \cellcolor{yellow!20}9\\ \cellcolor{yellow!20}14\end{array}\right]
\]
Then, we can use the property that
\begin{equation}
\label{eq:lt}
\mathbf{M}\cdot\mathbf{u} \gets {\mathbf{d_0}} \odot \mathbf{u} + {\mathbf{d_1}} \odot (\mathbf{u} \!\ll\! 1) + {\mathbf{d_2}} \odot (\mathbf{u} \!\ll\! 2) + {\mathbf{d_3}} \odot (\mathbf{u} \!\ll\! 3)
\end{equation}
to compute $\ctxt{\mathbf{M}\cdot\mathbf{u}}=\ptxt{\mathbf{d}_0}\times \ctxt{\mathbf{u}} + \ptxt{\mathbf{d}_1}\times \mathtt{HRot}(\ctxt{\mathbf{u}}, 1) + \ptxt{\mathbf{d}_2}\times \mathtt{HRot}(\ctxt{\mathbf{u}}, 2)+ \ptxt{\mathbf{d}_3}\times \mathtt{HRot}(\ctxt{\mathbf{u}}, 3)$, where $\times$ represents $\mathtt{PMult}$ and $+$ represents $\mathtt{HAdd}$.
For a general linear transform with $\text{\#diag}$ non-zero diagonals in $\mathbf{M}$, we require $(\text{\#diag}\!-\!1)$ $\mathtt{HRot}$s and $\text{\#diag}$ $\mathtt{PMult}$s with plaintexts.
%for evaluating a linear transform with $\text{\#diag}$ diagonals using this method.

\noindent\textbf{$\bullet$ Minimum key-switching (Min-KS)}~\cite{micro-2022-ark, crypto-2018-linear}.
The above process requires $\mathtt{HRot}$s by distances $R=1, 2, 3$,
%each needing a unique $\evk{R}$.
%As we require a unique $\evk{R}$ for every $\mathtt{HRot}$ distance $R$,
requiring three separate $\evk{R}$'s.
% need to be prepared.
Instead, Min-KS obtains $\ctxt{\mathbf{u} \!\ll\! 2}$ by performing $\mathtt{HRot}$ by 1 on $\ctxt{\mathbf{u} \!\ll\! 1}=\mathtt{HRot}(\ctxt{\mathbf{u}}, 1)$.
Likewise, $\ctxt{\mathbf{u} \!\ll\! 3}$ is obtained from $\ctxt{\mathbf{u} \!\ll\! 2}$.
Thus, Min-KS enables evaluating the linear transform with only $\evk{1}$.

\noindent\textbf{$\bullet$ Baby-step giant-step algorithm (BSGS)}~\cite{crypto-2018-linear, eurocrypt-2021-efficient}.
Eq.~\ref{eq:lt} can be alternatively evaluated in two steps using precomputed $\mathbf{d}_2'=(\mathbf{d}_2 \gg 2), \mathbf{d}_3'=(\mathbf{d}_3 \gg 2)$:
\begin{equation}
\label{eq:bsgs}
\begin{split}
\text{BS:\ }& \mathbf{v} \gets (\mathbf{u} \!\ll\! 1)\\
\text{GS:\ }& \mathbf{M} \cdot \mathbf{u} \gets (\mathbf{d}_0 \odot \mathbf{u} + \mathbf{d}_1 \odot \mathbf{v}) \!+\! (\mathbf{d}_2' \odot \mathbf{u} + \mathbf{d}_3' \odot \mathbf{v}) \!\ll\! 2
\end{split}
\end{equation}
%\[
%\mathbf{v}=(\mathbf{u} \ll 1), \mathbf{M}\cdot\mathbf{u}=(\mathbf{d}_0\odot\mathbf{u} + \mathbf{d}_1\odot\mathbf{v}) + (\mathbf{d}_2'\odot\mathbf{u} + \mathbf{d}_3'\odot\mathbf{v}) \ll 2
%\]
%
To generalize, the first baby-step (BS) performs iterative $\mathtt{HRot}$s by 1 (Min-KS) to produce $\sqrt{\text{\#diag}}$ \emph{``baby'' ciphertexts}:
\[
\ctxt{\mathbf{u}}, \ctxt{\mathbf{u} \ll 1}, \cdots, \ctxt{\mathbf{u} \ll (\sqrt{\text{\#diag}}-1)}.
\]
With these, the second giant-step (GS) generates $\sqrt{\text{\#diag}}$ \emph{partial results} (\eg $\ctxt{\mathbf{d}_0 \!\odot \mathbf{u} + \mathbf{d}_1 \!\odot \mathbf{v}}$ and $\ctxt{\mathbf{d}_2' \!\odot \mathbf{u} + \mathbf{d}_3' \!\odot \mathbf{v}}$ for Eq.~\ref{eq:bsgs}), which are accumulated into a single ciphertext with $\mathtt{HRot}$s by a larger distance (another Min-KS).
%
%Evaluating this with an encrypted input $\ctxt{\mathbf{u}}$ can be divided into the first round of $\mathtt{HRot}$s, where we prepare $\sqrt{\text{\#diag}}$ ciphertexts ($\ctxt{\mathbf{u}}$ and $\ctxt{\mathbf{v}}$) by rotating $\ctxt{\mathbf{u}}$ by a ``baby'' distance of 1, and the second round of $\mathtt{HRot}$s, where we rotate partial accumulation results by a ``giant'' distance of $\sqrt{\text{\#diag}}=2$.

Although BSGS requires two \textevks for a linear transform, BSGS significantly reduces the computational cost for large \#diag.
Most of the computational cost stems from $\mathtt{KS}$ in $\mathtt{HRot}$ evaluations.
Counting the total number of $\mathtt{KS}$ (\#$\mathtt{KS}$), BSGS reduces \#$\mathtt{KS}$ from $(\text{\#diag}-1)$ to $2(\sqrt{\text{\#diag}}-1)$.
%$\mathtt{KS}$ instead of $(\text{\#diag}-1)$, significantly reducing the computational costs for large linear transform matrices.
%However, even with Min-KS, we require at least two \textevks for BSGS.

\noindent\textbf{$\bullet$ CtS/StC matrix decomposition}~\cite{access-2019-dft}.
The CtS/StC linear transform matrix can be decomposed into multiple matrices, each with a small number of non-zero diagonals.
Fig.~\ref{fig:cts-optimizations} shows an example for $N=32$, where a $16\times16$ matrix is decomposed into two matrices with four/seven \#diag.
%of four and seven. with four and seven diagonals, respectively.
%
To generalize, a $\frac{N}{2} \times \frac{N}{2}$ CtS/StC matrix is decomposed into $D_{\mathrm{tr}}$ matrices, each with $\mathcal{O}(\log_{D_{\mathrm{tr}}} N)$ diagonals.
As the number of diagonals determines both the computation cost ($\text{\#}\mathtt{KS}=\mathcal{O}(\sqrt{\text{\#diag}})$) and the memory traffic ($\mathcal{O}(\text{\#diag})$ plaintext loads), increasing $D_{\mathrm{tr}}$ is desirable for large $N$ (e.g., $2^{16}$).
However, as a trade-off, this decomposition consumes $D_{\mathrm{tr}}$ levels during CtS.

\subsection{Excessive working set in conventional CtS}
\label{sec:motivation:final}

\noindent \textbf{$\bullet$ Min-KS is essential for accelerators}.
State-of-the-art FHE accelerators~\cite{isca-2022-craterlake, micro-2022-ark, isca-2022-bts, isca-2023-sharp, micro-2024-ufc, micro-2024-trinity, isca-2025-fast, micro-2025-hawk} provision hundreds of MBs of on-chip memory, enabling extensive reuse of \textevks under Min-KS.
Without Min-KS, BTS~\cite{isca-2022-bts} shows that off-chip memory accesses for \textevks dominate FHE runtime, effectively negating the benefits of hardware acceleration.
%; simply loading the \textevks (1993MiB without Min-KS) for CtS would take 4.56$\times$ longer than the full Min-KS-based CtS execution time in \NAME.

%which in turn makes BTS 11.5$\times$ slower than Min-KS-based SHARP~\cite{isca-2023-sharp}.

%, with Min-KS and BSGS applied, where \#diag is in range 15--31.

%Prior accelerator studies carefully tuned CtS in accordance with available on-chip memory capacity.

\noindent \textbf{$\bullet$ Capacity-aware BSGS fine-tuning~\cite{isca-2023-sharp}}. %carefully tunes CtS in accordance with available on-chip memory capacity.
With $\#\text{baby} = \mathcal{O}(\sqrt{\text{\#diag}})$ baby ciphertexts, efficient BSGS requires the following working set ($\text{WS}_{\text{BSGS}}$) to fit within on-chip memory:
\begin{equation}
\label{eq:working-set}
\text{WS}_{\text{BSGS}} = (\text{\#baby} + 1)\cdot \text{Size}_{\text{ciphertext}} + \text{Size}_{\evk{}} + \text{WS}_{\mathtt{KS}}.
\end{equation}
This capacity allows all baby ciphertexts to remain on-chip for the entire giant step (GS in Eq.~\ref{eq:bsgs}).
GS multiplies the baby ciphertexts with streamed diagonal plaintexts to produce partial results, each immediately accumulated into the output ciphertext and followed by an $\mathtt{HRot}$ evaluation~\cite{micro-2022-ark}.
%which generates multiple partial results using the diagonal plaintexts, which are discarded immediately after being loaded and used.
%Each partial result is directly accumulated into the output ciphertext, followed by an $\mathtt{HRot}$ evaluation~\cite{micro-2022-ark}.
At the highest level, the ciphertext size ($\text{Size}_{\text{ciphertext}}$) reaches 24MiB and the \textevk size ($\text{Size}_{\evk{}}$) reaches 60MiB.

To fit $\text{WS}_{\text{BSGS}}$ within its on-chip memory, SHARP~\cite{isca-2023-sharp} introduces BSGS fine-tuning, restricting $\text{\#baby} \le 4$ for approximately 200MiB of on-chip memory.
This restricts the size of the working set: $\text{WS}_{\text{BSGS}} \le \text{180MiB} + \text{WS}_{\mathtt{KS}}$.
With memory-efficient scheduling (e.g., MAD~\cite{micro-2023-mad}) for $\mathtt{KS}$, $\text{WS}_{\text{BSGS}}$ can be held on-chip for the entire linear transform.
%with a memory-efficient scheduling, such as MAD~\cite{micro-2023-mad}, limiting the $\mathtt{KS}$ working set ($\text{WS}_{\mathtt{KS}}$).

\noindent \textbf{$\bullet$ Modest $D_\mathrm{tr}$ for CtS matrix decomposition}. The choice of $D_\mathrm{tr}$ remained between 2--4 across various FHE algorithm~\cite{eurocrypt-2019-improved, cic-2025-overmodraise, asia-2025-subring, eurocrypt-2021-efficient}, application~\cite{wahc-2020-lattigo, icml-2022-resnet, asplos-2025-orion, github-liberate_FHE}, and hardware~\cite{hpca-2025-effact, hpca-2023-fab, hpca-2025-anaheim, isca-2023-sharp} studies.
Such $D_\mathrm{tr}$ choices allow moderately small \#diag values, such as 
%Then, \#diag for the decomposed CtS/StC matrices becomes 15--256 for $N=2^{16}$.
%Table~\ref{tab:cts-analysis} shows that \#diag becomes
15--31 for 4-level ($D_\mathrm{tr}=4$) CtS (baseline throughout this paper) shown in Table~\ref{tab:cts-analysis}.
This aligns well with BSGS fine-tuning because $\text{\#baby}=4 \simeq \mathcal{O}(\sqrt{\text{\#diag}})$.

%This also aligns with recent architectural trends toward massive caches (\eg AMD MI300X GPU's 256MiB L3 cache~\cite{amd-mi300x-benchmark}).

\begin{comment}

\subsection{Limitations of cache blocking}
\label{sec:motivation:scheduling}

Cache blocking reduces the working set by partitioning data into smaller blocks that fit in on-chip memory.
For example, MAD~\cite{micro-2023-mad} partitions a polynomial per limb or per small group of multiple ($\le$20) limbs, enhancing cache reuse for relatively small on-chip memory of 1--32MB.

However, this intra-polynomial partitioning has inherent limitations due to all-to-all data dependencies from NTT and BConv.
Viewing a polynomial as an $L\times N$ matrix, each NTT output element depends on all elements of a limb (row).
Conversely, BConv multiplies an $L'\times L$ matrix to it, creating a column-wise dependencies.
Together, they generate full all-to-all data dependencies, requiring the entire polynomial in cache for reuse across operation sequences.

Also, state-of-the-art FHE accelerators~\cite{isca-2022-craterlake, micro-2022-ark, isca-2022-bts, isca-2023-sharp, micro-2024-ufc, micro-2024-trinity, isca-2025-fast, micro-2025-hawk} provide hundreds of MBs of on-chip memory, aligning with recent architectural trends toward massive caches (\eg AMD MI300X GPU's 256MiB L3 cache~\cite{amd-mi300x-benchmark}).
The massive memory capacities are coupled with Min-KS, reusing \textevks entirely stored within on-chip memory for multiple $\mathtt{KS}$ computations.

\end{comment}

\noindent \textbf{$\bullet$ Problems of conventional CtS}.
%
%The use of Min-KS is critical for \textboot performance 
Despite optimizations such as BSGS fine-tuning, the large working set in Eq.~\ref{eq:working-set} remains a major constraint on performance and hardware efficiency.
Prior accelerators adopt decoupled data orchestration~\cite{isca-2022-craterlake} to enable software prefetching.
However, under tight on-chip memory budgets as in SHARP~\cite{isca-2023-sharp}, limited capacity leaves little room for prefetching, exposing off-chip memory access latency that incurs substantial overheads.

Later in the evaluation (Fig.~\ref{fig:sensitivity:util}), we quantitatively show that these memory bottlenecks severely limit hardware utilization.
In our baseline with over 200MiB of on-chip memory (\BASE), NTT units, which are the core compute engines of the architecture, achieve only 25.4\% utilization during \textboot, while off-chip memory (HBM) is utilized at 92.6\%.

Increasing on-chip memory capacity is a straightforward solution.
However, this approach increases area and energy costs and further entrenches memory-dominated designs (see \S\ref{sec:motivation:overview}).
Thus, rather than scaling memory capacity, we seek a principled solution that reduces the CtS working set and improves hardware efficiency at its root.

%% file: algorithm.tex
\section{\NAME: Cryptographic \& Algorithmic Enhancements}
\label{sec:alg}

We challenge conventional practices in cryptographic and algorithmic design of FHE, proposing novel algorithms and optimized parameter choices tailored to the memory capacity and bandwidth limitations in FHE accelerators.
Table~\ref{tab:cts-analysis} compares a conventional 4-level CtS configuration ($D_\mathrm{tr}=4$) with our optimized 6(+1)-level configuration ($D_\mathrm{tr}=7$).

\subsection{Fine-grained CtS (fg-CtS)}
\label{sec:alg:fg-cts}

%Allocating more levels for CtS may appear disadvantageous, sacrificing \textefflevel for a marginal reduction in computational cost.
%While 4-level CtS ($D_\mathrm{tr}=4$) with BSGS requires 32 $\mathtt{KS}$ computations, 5-level or 6-level CtS requires 28.

\emph{We depart from the conventions in \S\ref{sec:motivation:final}; we set $D_{\mathrm{CtS}}$ aggressively high (up to seven) and also eliminate the use of BSGS}, resorting back to the basic linear transform method based on Eq.~\ref{eq:lt}.
We refer to this approach as fine-grained CtS (fg-CtS).
Through fine-grained decomposition of the CtS matrix, we limit each CtS level to handle at most eight non-zero diagonals, making its computation sufficiently inexpensive even without BSGS.
%Then, handling each CtS level without BSGS becomes sufficiently cheap, 
%each CtS level without BSGS, 
%fg-CtS converts each CtS level into a linear transform evaluation with no more than eight diagonals, computed using a single \textevk according to Eq.~\ref{eq:lt}.
At first glance, this appears disadvantageous, as fg-CtS increases the total $\mathtt{KS}$ count and consumes two additional levels compared to the baseline 4-level CtS.
%Furthermore, we eliminate the use of BSGS (Eq.~\ref{eq:bsgs}) and instead employ the basic linear transform method (Eq.~\ref{eq:lt}) for fg-CtS.
%With fg-CtS, each decomposed CtS matrix contains at most eight diagonals, requiring no more than seven $\mathtt{HRot}$s (each with one $\mathtt{KS}$) and a single \textevk per linear transform.
%While BSGS can still reduce the total $\mathtt{KS}$ count, BSGS requires the use of two \textevks (see \S\ref{sec:back:lt:bsgs})
%As shown in Table.~\ref{tab:cts-analysis}, fg-CtS has mixed effects.
%While fg-CtS reduces the memory loads for plaintexts and \textevks, it rather increases the total $\mathtt{KS}$ count without BSGS.
%More importantly, fg-CtS reduces \textefflevel by three (see Table~\ref{tab:params}) under our parameter choices.

% synthesis
Nonetheless, in systems constrained by memory capacity and bandwidth, the performance gains from eliminating BSGS outweigh the disadvantages.
Removing BSGS is equivalent to setting $\text{\#baby}=1$ in Eq.~\ref{eq:working-set}, substantially reducing the working set per linear transform.
This frees on-chip memory for aggressive software data prefetching for the next CtS level, which significantly decreases off-chip memory stalls in \textboot (shown later in Fig.~\ref{fig:nttu-stall-breakdown}).
Also, fg-CtS requires only one \textevk per CtS level and fewer diagonal plaintexts, reducing the total off-chip memory accesses to these data objects by 46.8\% ($(1008+448)\rightarrow(451+324)$ in Table~\ref{tab:cts-analysis}).
%fg-CtS significantly reduces the working set size per linear transform.
%Without BSGS, we no longer have to store the $\sqrt{\text{\#diag}}$ BS ciphertexts for a linear transform.
%For our running example, the working set at the top level now comprises a single \textevk (60MiB), 7--8 plaintexts ($\le$96MiB), and only two ciphertexts (48MiB)---one for the input and the other for the accumulation.

Table~\ref{tab:params} presents our optimized parameter selection for fg-CtS, which ensures 128-bit security~\cite{iacr-2024-guideline, jmc-2025-lattice-estimator}.
The additional level consumption of fg-CtS reduces \textefflevel, potentially increasing the frequency of \textboot.
However, state-of-the-art FHE ML studies~\cite{ccs-2024-neujeans, asplos-2025-orion, arxiv-2026-ckks-llama, icml-2024-he-transformer, access-2024-hyphen, asplos-2025-resbm} show a clear trend toward low \textefflevel ($\le10$, down to two~\cite{ccs-2024-neujeans}), with limited ability to exploit the extra levels of the baseline.
Moreover, \textefflevel can be increased under fg-CtS if necessary (\eg by increasing $\beta$).

%Although the baseline 4-level CtS features a higher \textefflevel, more levels do not necessarily translate to less frequent \textboot as shown in Fig.~\ref{fig:boot-count}.
%Our analysis of private CNN workloads based on Orion~\cite{asplos-2025-orion} shows that, despite having 30--33\% higher \textefflevel (12--13 vs. 9--10, Table~\ref{tab:params}), the baseline parameter set does not decrease \textboot count at all.
%We counted the number of \textboot for private CNN workloads based on Orion~\cite{asplos-2025-orion} and found out that, although the baseline parameter set features 30--33\% higher \textefflevel (12--13, see Table~\ref{tab:params}) than our optimized one (9--10), it does not reduce the number of \textboot at all.
%While we see \textboot count reductions for \textefflevel over 16, a parameter set with such a high \textefflevel, which can be achieved by using a large $\beta$ (\eg 11 for $\efflevel=17$--$18$), is generally very inefficient due to significant increases in both computation and memory access~\cite{tches-2021-100x, isca-2022-bts}.

%One drawback of fg-CtS is the increased level consumption of CtS.

%\begin{figure}
%    \centering
%    \includegraphics[width=0.96\columnwidth]{boot-count.pdf}
%    \Description{}
%    \caption{The number of \textboot performed depending on \textefflevel for private CNN workloads based on Orion~\cite{asplos-2025-orion}. SiLU activations are used.}
%    \label{fig:boot-count}
%\end{figure}

\setlength{\tabcolsep}{4pt}
\begin{table}[t]
    \caption{Required number of plaintexts (\#ptxt), \textevks (\#\textevk), and $\mathtt{KS}$ computations (\#$\mathtt{KS}$) as well as plaintext compression ratio (compr.) for CtS with Min-KS.}
    \label{tab:cts-analysis}
    \centering
    {
    \small
    \begin{tabular}{l|ccc|cr@{\; }lc}
    \toprule
    \multicolumn{1}{c|}{\multirow{2}{*}{Level}}  &  \multicolumn{3}{c|}{Baseline: 4-level CtS} & \multicolumn{4}{c}{Optimized: 6(+1)-level fg-CtS}\\
    & \#$\mathtt{KS}$ & \#ptxt & \#\textevk & \#$\mathtt{KS}$ & \#ptxt & (compr.) & \#\textevk \\
    \midrule
       $\mathrm{intmd.}$ & - & - & - & 0 & 8 & (1$\times$) & 0 \\
       $\mathrm{top}$  & 6 & 16 & 2 & 6 & 7 & (8$\times$) & 1\\
       $\mathrm{top}-1$ & 6 & 15 & 2 & 6 & 7 & (32$\times$) &  1\\
       $\mathrm{top}-2$ & 10 & 31 & 2 & 6 & 7 &(128$\times$) & 1\\
       $\mathrm{top}-3$ & 10 & 31 & 2 & 6 & 7 &(512$\times$) &  1 \\
       $\mathrm{top}-4$ & - & - &- & 6 & 7 &(2048$\times$) &  1 \\
       $\mathrm{top}-5$ & - & - &- & 7 & 7 &(8192$\times$) &  1 \\
       \midrule
       \multicolumn{1}{c|}{Total (\#)} & 32 & 93 & 8 & 37 & 50 & & 6\\
       \multicolumn{1}{c|}{(MiB)} & - & 1008 & 448 &  - &  \multicolumn{2}{c}{451 $\rightarrow$ \textbf{23.5}} & \textbf{324}\\
       \bottomrule
    \end{tabular}
    }
\end{table}
\setlength{\tabcolsep}{6pt}

\setlength{\tabcolsep}{4pt}
\begin{table}[t]
    \centering
    \small
    \caption{Baseline (Base) and optimized (Opt) FHE parameter sets. \textboot consumes 11--12 more levels in addition to CtS levels. \textefflevel is computed assuming scale $\Delta\simeq2^{40}$ after \textboot. All prime sub-moduli are chosen smaller than $2^{31}$, following Cheddar's methodology~\cite{asplos-2026-cheddar}. $\beta$ ($\mathtt{dnum}$ in \cite{rsa-2020-better}) is chosen to be four, but can be increased to further increase \textefflevel.}
    \label{tab:params}
    \begin{tabular}{c|cccccc}
        \toprule
        Param & $N$ & $Q_\mathrm{top}$ (\#$\mathcal{Q}_i$) & $P$ (\#$\mathcal{P}_i$) & $\beta$ & CtS levels & $L_\mathrm{eff}$\\
        \midrule
        Base & $2^{16}$ & $\sim2^{1373}$ (47) & $\sim2^{372}$ (12) & 4 & 4 & 12--13\\
        Opt & $2^{16}$ & $\sim2^{1374}$ (47) & $\sim2^{372}$ (12) & 4 & 6(+1) & {\color{white} 0}9--10\\
        \bottomrule
    \end{tabular}
\end{table}
\setlength{\tabcolsep}{6pt}

\subsection{Plaintext compression}
\label{sec:alg:compress}

We find repetitive data patterns in CtS/StC plaintexts, which allow compression and decompression almost free of cost.
Unlike encrypted ciphertexts and \textevks, which must be indistinguishable from random values, plaintexts are unencrypted vectors encoded in a specific format and may exhibit data patterns amenable to compression.
Suppose a vector includes repetitions; \eg $\mathbf{u}=[X,Y,X,Y]\in\mathbb{C}^{N/2}$ for $N=8$.
We identify that, when we encode $\mathbf{u}$ into a plaintext polynomial $u=\ptxt{\mathbf{u}}$ and apply NTT to it, we get a data pattern of $\mathrm{NTT}(u_{\mathcal{Q}_i}) = [A,B,C,D,A,B,C,D]$ for each limb.

This is formalized as Theorem~\ref{thm:compression} (proof in Appendix~\ref{app:proof}).
If the original vector contains $S$ distinct values repeated $\frac{N}{2S}$ times, the encoded polynomial has limbs with $2S$ values repeated $\frac{N}{2S}$ times, enabling $\frac{N}{2S}\times$ plaintext compression.
\begin{theorem}
\label{thm:compression}
Suppose $\mathbf{u}\in\mathbb{C}^{N/2}$ satisfies $\mathbf{u}[j]=\mathbf{u}[j+S]$ for all $0\le j < \frac{N}{2} - S$, where $S\mid\frac{N}{2}$.
Let $u=\ptxt{\mathbf{u}}\in\mathcal{R}_Q$ be its CKKS encoding, and let $\mathbf{t}_{\mathcal{Q}_i}=\mathrm{NTT}(u_{\mathcal{Q}_i})\in\mathbb{Z}_{\mathcal{Q}_i}^N$ be the NTT of each limb.
Then for all $0\le j < N - 2S$, $\mathbf{t}_{\mathcal{Q}_i}[j]=\mathbf{t}_{\mathcal{Q}_i}[j+2S]$.
\end{theorem}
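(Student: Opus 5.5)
Both the CKKS encoding and the NTT are evaluations of a polynomial at roots of $\mathcal{X}^N+1$, so the plan is to show that periodicity of the slot vector forces $u$ to be a polynomial in $\mathcal{X}^{N/(2S)}$, and then to read off periodicity of the NTT output from the bit-reversed ordering.

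\textbf{Step 1 (structure of $u$).} Write $\zeta=e^{\pi i/N}$. The encoding sets $u(\zeta^{5^j})=\mathbf{u}[j]$ and $u(\zeta^{-5^j})=\overline{\mathbf{u}[j]}$ for $0\le j<N/2$, then rounds the scaled coefficients. The group $\mathbb{Z}_{2N}^\times=\{\pm5^j\}$ has a unique subgroup $H$ of index $2S$. Because $N/2$ is a power of two, $S$ is too, so $H=\langle 5^{S}\rangle$ with $|H|=N/(2S)$. The hypothesis says the slot values are constant on cosets of $H$. A real polynomial of degree $<N$ whose values on the primitive $2N$-th roots are invariant under $\zeta\mapsto\zeta^{h}$ for all $h\in H$ is fixed by the automorphisms $\mathcal{X}\mapsto\mathcal{X}^{h}$. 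I will then show that the fixed ring is $\mathbb{R}[\mathcal{X}^{N/(2S)}]$ modulo $\mathcal{X}^N+1$. Concretely, $5^S\equiv1 \pmod{4S}$, and $H$ is exactly the set of $h\equiv1\pmod{4S}$. The monomial $\mathcal{X}^k$ is mapped to $\pm\mathcal{X}^{hk \bmod N}$, so averaging over $H$ kills every $k$ not divisible by $N/(2S)$ (a character sum). Hence $u(\mathcal{X})=v(\mathcal{X}^{N/(2S)})$ with $\deg v<2S$ over the reals.

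Rounding acts coefficientwise and preserves zero coefficients, so the integer polynomial keeps this form. This also holds at the RNS level, since each limb $u_{\mathcal{Q}_i}$ has the same support.

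\textbf{Step 2 (NTT periodicity).} The NTT evaluates $u_{\mathcal{Q}_i}$ at $\omega^{2\,\mathrm{br}(j)+1}$, where $\omega$ is a primitive $2N$-th root mod $\mathcal{Q}_i$ and the output is in bit-reversed order, as in the standard negacyclic NTT. Since $u$ depends only on $\mathcal{X}^{N/(2S)}$, we have $\mathbf{t}[j]=v\bigl(\omega^{(2\mathrm{br}(j)+1)N/(2S)}\bigr)$. This value depends on $2\mathrm{br}(j)+1 \bmod 4S$, i.e.\ on $\mathrm{br}(j)\bmod 2S$. Under $\log_2 N$-bit reversal, $\mathrm{br}(j)\bmod 2S$ is determined by the top $\log_2(2S)$ bits of $j$, and these do not by themselves give period $2S$ in $j$. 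So if the paper's NTT outputs in natural order, the periodicity is immediate: $\mathbf{t}[j]$ depends only on $j\bmod 2S$. If it uses bit-reversed order, the statement as written would need adjusting. I will therefore fix the convention that the NTT output is in natural order of evaluation points $\omega^{2j+1}$, which gives $\mathbf{t}[j]=\mathbf{t}[j+2S]$ directly.

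\textbf{Main obstacle.} The delicate part is Step 1: identifying the subgroup generated by $5^S$ with $\{h\equiv1\pmod{4S}\}$, and verifying that the conjugate slots, via $-5^j$, respect the same coset structure. For the first, I would use the fact that $5$ has order $N/2$ and that $5^{2^k}\equiv1+2^{k+2}\pmod{2^{k+3}}$. Choosing the ordering convention consistently is the second point to pin down.
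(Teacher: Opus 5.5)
Your proof is correct, but it takes a different route from the paper's. The paper never works with the full invariance group. It asserts, with only a loose justification, that $5^{KS}\equiv 4S+1 \pmod{2N}$ for some $K$, derives the single identity $u(\mathcal{X})=u(\mathcal{X}^{4S+1})$, and then evaluates at $\omega^{2j+1}$. You instead show that $u$ is fixed by every $\mathcal{X}\mapsto\mathcal{X}^h$ with $h\equiv 1 \pmod{4S}$. Averaging over this group gives the sparsity statement $u(\mathcal{X})=v(\mathcal{X}^{N/(2S)})$, and periodicity follows because $\omega^{N/(2S)}$ has order $4S$.

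The paper's route is shorter. Yours costs a 2-adic computation and a counting argument, but it buys three things:
\begin{enumerate}
\item Scaling and rounding are harmless, since zero coefficients stay zero. The paper's definition of encoding omits rounding altogether.
\item It explains the block pattern under bit-reversed storage that the paper mentions in \S\ref{sec:alg:compress}.
\item It supplies a step the paper's last line actually needs. Substituting $\mathcal{X}=\omega^{2j+1}$ into $u(\mathcal{X})=u(\mathcal{X}^{4S+1})$ gives the exponent $(2j+1)(4S+1)=2j+1+4S(2j+1)$, not $2j+1+4S$. A shift of exactly $4S$ requires invariance under $h\equiv 1+4S(2j+1)^{-1} \pmod{2N}$, that is, under the whole group $\langle 4S+1\rangle=\{h\equiv 1 \pmod{4S}\}$ that you use.
\end{enumerate}

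Two of your open points are easily settled. The worry about conjugate slots is immediate from real coefficients, since $u(\zeta^{-m})=\overline{u(\zeta^{m})}$. The paper uses this tacitly when it passes from the $N/2$ points $\zeta^{5^j}$ to a polynomial identity. The natural-order convention $\mathbf{t}_{\mathcal{Q}_i}[j]=u_{\mathcal{Q}_i}(\omega^{2j+1})$ you settle on is exactly the paper's definition of the NTT, so no adjustment to the statement is needed.

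One misstatement needs fixing. $\mathbb{Z}_{2N}^\times\cong\mathbb{Z}_2\times\mathbb{Z}_{N/2}$ is not cyclic, so it has no unique subgroup of index $2S$. For instance, $\langle 5\rangle$, $\langle -5\rangle$ and $\{\pm 5^{2j}\}$ all have index $2$. What is true, and what your last paragraph actually proves, is this: $\langle 5^S\rangle$ is contained in $\{h\equiv 1 \pmod{4S}\}$, since $5^S\equiv 1 \pmod{4S}$. Both are subgroups of order $N/(2S)$ inside $\langle 5\rangle=\{h\equiv 1 \pmod 4\}$, so they coincide.
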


Due to the special structure of CtS/StC matrix decomposition, most of the CtS/StC plaintexts are compressible.
For example, each diagonal of the rightmost decomposed matrix in Fig.~\ref{fig:cts-optimizations} contains four values repeated four times ($S\!=\!4$ and $N\!=\!32$), which allows a 4$\times$ compression of its corresponding plaintext.
As shown in Table~\ref{tab:cts-analysis}, our parameters enable up to an $8192\times$ compression of CtS plaintexts, reducing the total off-chip memory traffic for plaintexts to just 23.5MiB.
%
%Plaintext compression significantly alleviates the memory bandwidth bottleneck in CtS/StC and further reduces the working set size.

WPC~\cite{ccs-2025-wpc} independently observes the same property as Theorem~\ref{thm:compression} and applies compression to ML weight plaintexts.
Our plaintext compression is an extension to CtS/StC, where the memory impact is greater due to the high levels.
Also, we also make accelerator-aware modifications;
for instance, storing limbs in bit-reversed order for efficient automorphisms as in ARK~\cite{micro-2022-ark} changes the repetition pattern (e.g., $[A, A, C, C, B, B, D, D]$) but preserves compressibility with an appropriate hardware support (\S\ref{sec:arch:compress}).

%Plaintexts used at lower levels also frequently contain repetitions, which are particularly common when sparse-slot encoding is applied.
%The default full-slot encoding converts a vector in $\mathbb{C}^{N/2}$ into a plaintext polynomial.
%However, it is also possible to encode a shorter vector, such as $\mathbf{u}\in\mathbb{C}^{N/8}$, into a plaintext.
%The sparse-slot encoding of $\mathbf{u}$ is equivalent to a full-slot encoding of $[\mathbf{u}, \mathbf{u}, \mathbf{u}, \mathbf{u}]\in\mathbb{C}^{N/2}$~\cite{eurocrypt-2018-heaanboot}, which naturally creates repetitions for the resulting plaintext.

\begin{figure}[t]
    \centering
    \includegraphics[width=0.99\columnwidth]{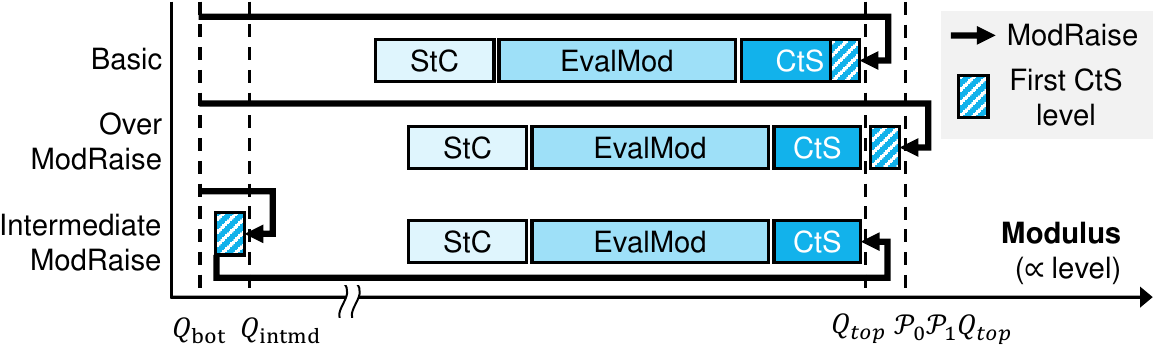}
    \Description{}
    \caption{\textboot modulus change across ModRaise methods.}
    \label{fig:intmd-modraise}
    %\vspace{-0.1in}
\end{figure}

\subsection{Intermediate ModRaise}
\label{sec:alg:intmd}

We introduce intermediate ModRaise, a redesign of conventional ModRaise in \textboot that mitigates the limitations of fg-CtS and plaintext compression.
It executes the first CtS level at a separate intermediate modulus $Q_\mathrm{intmd}$ (e.g., $2^{140}$) instead of the top modulus $Q_\mathrm{top}$ (e.g., $2^{1400}$), effectively reducing the level consumption of fg-CtS by one.
Moreover, the first-level CtS plaintexts use $Q_\mathrm{intmd}$; as these plaintexts are not compressible (see Table~\ref{tab:cts-analysis}), shrinking their modulus directly reduces their size and off-chip memory traffic.

Intermediate ModRaise requires several cryptographic adjustments.
Before \textboot, we switch to a ternary sparse secret~\cite{acns-2022-sparseboot} (Hamming weight: 48) from a subring of degree $N'=2^{13}$~\cite{asia-2025-subring}, which maintains 128-bit security based on our analysis with Lattice Estimator~\cite{jmc-2025-lattice-estimator}.
%while enabling the intermediate-level execution.
%We combine sparse-secret~\cite{acns-2022-sparseboot} and subring-secret~\cite{asia-2025-subring} encapsulation to switch, prior to bootstrapping, to a smaller subring instance ($N=2^{13}$) that satisfies 128-bit security.
The subring structure allows skipping $\mathtt{KS}$ during the first CtS level~\cite{asia-2025-subring}.
We extend the idea of OverModRaise~\cite{cic-2025-overmodraise} to realize intermediate ModRaise.
Instead of raising the modulus above $Q_\mathrm{top}$ (\eg to $\mathcal{P}_0\mathcal{P}_1Q_\mathrm{top}$ as in Fig.~\ref{fig:intmd-modraise}), we utilize a dedicated intermediate modulus $Q_\mathrm{intmd}$.
If the first CtS level involves \#ptxt plaintexts, whose absolute coefficient values are smaller than $B_\mathrm{ptxt}$, Eq.~\ref{eq:intmd} is a sufficient condition for correctness (see Appendix~\ref{app:int-modraise}).
\begin{equation}
\label{eq:intmd}
Q_\mathrm{intmd} \ge \text{\#ptxt} \cdot B_\mathrm{ptxt} \cdot Q_\mathrm{bot}\cdot N
\end{equation}

%% file: architecture.tex
\section{\NAME: Refining the Architecture}
\label{sec:arch}

%We propose \NAME, a co-design for FHE with an architecture optimized for our algorithms and parameter choices.
%, which is based on SHARP~\cite{isca-2023-sharp}.

Starting from a baseline architecture resembling SHARP~\cite{isca-2023-sharp} (\BASE), we analyze its bottlenecks under \NAME’s cryptographic and algorithmic techniques (\S\ref{sec:alg}) and propose architectural refinements to address them with minimal hardware overheads.
SHARP has likewise served as the baseline for Trinity~\cite{micro-2024-trinity}, FAST~\cite{isca-2025-fast}, and HAWK~\cite{micro-2025-hawk}.
%appliof  on accelerator architectures,

\subsection{Typical FHE accelerator architecture}
\label{sec:arch:sharp}

SHARP builds on prior designs: its vector cluster architecture and NTT unit (NTTU) follow F1~\cite{micro-2021-f1}, the BConv (BConvU) and automorphism (AutoU) units are adopted from ARK~\cite{micro-2022-ark}, and its decoupled data orchestration and scheduling methods derive from CraterLake~\cite{isca-2022-craterlake}.
Notably, F1's vector cluster and NTTU architecture~\cite{micro-2021-f1} has since been widely adopted by numerous following studies~\cite{isca-2022-craterlake, micro-2022-ark, isca-2023-sharp, isca-2025-fast, micro-2025-hawk, micro-2024-trinity}.

The vector cluster architecture deploys clusters with $\sqrt{N}$ vector lanes, enabling $\sqrt{N}$-way parallelism for its functional units (FUs).
The NTTU implements a four-step FFT datapath~\cite{sc-1989-fft4}, interpreting a length-$N$ limb as a $\sqrt{N}\times\sqrt{N}$ matrix.
It employs a long pipeline for the first row-wise $\sqrt{N}$-point NTTs and the subsequent column-wise NTTs, with inter-lane and inter-cluster data transfers to handle the required transposition in between.
The BConvU is composed of multiply-and-add (MAD) units arranged as a small output-stationary systolic array ($2\times6$ in our baseline) within each lane, adequate for
%The multiply-and-add (MAD) units inside the BConvUs form a $2\sqrt{N}\times6$ output-stationary systolic array inside each cluster to efficiently handle
the matrix-matrix multiplications in BConv~\cite{micro-2022-ark}.
%The AutoU design is based on ARK~\cite{micro-2022-ark} and we deploy a separate network-on-chip (NoC) for automorphism.
SHARP introduces an element-wise engine (EWE) to support compound element-wise operations sharing input and output operands, reducing on-chip memory traffic.

The SHARP memory system includes a 180MiB main scratchpad shared by all FUs and an 18MiB BConv buffer dedicated to the NTTUs and BConvUs.
Two HBM stacks provide 1TB/s of off-chip memory bandwidth; 1TB/s has served as a reference point across numerous accelerator proposals.
HERACLES, a recently taped-out FHE accelerator from Intel, similarly employs two HBM3 stacks~\cite{isscc-2026-heracles}.

\subsection{Enhanced baseline architecture: \BASE}
\label{sec:arch:base}

%We develop \BASE, an enhanced baseline to reflect recent advancements in FHE.
%We design \BASE to be an enhanced baseline reflecting recent advancements in FHE.
%In particular, \BASE uses a conventional 32-bit word size instead of SHARP’s 36-bit design.
%This takes the rational scaling method of BitPacker~\cite{asplos-2024-bitpacker} into account, allows us to freely choose prime sub-moduli of desired bit lengths, regardless of the required precision for CKKS workloads.
%In contrast, the original 36-bit SHARP is overly optimized for specific precision ($\Delta=2^{35}$).
%We adopt the 25-30 prime system from Cheddar~\cite{asplos-2026-cheddar}, which is an efficient prime choice strategy variant of BitPacker~\cite{asplos-2024-bitpacker}.
%BitPacker

We develop \BASE, an enhanced baseline reflecting recent FHE advances.
In particular, \BASE adopts a conventional 32-bit word size instead of SHARP’s 36-bit design, aligning with the rational-scaling approach of BitPacker~\cite{asplos-2024-bitpacker}.
It enhances hardware efficiency and enables flexible parameter selection.
%This enables flexible selection of prime sub-moduli with desired bit lengths, independent of CKKS precision requirements.
%In contrast, SHARP’s 36-bit datapath is tightly optimized for a specific precision setting ($\Delta=2^{35}$).
We further adopt Cheddar’s 25-30 prime system~\cite{asplos-2026-cheddar}, an efficient BitPacker-based prime selection strategy.

%equipped with eight vector clusters and .
Although SHARP originally includes four clusters, we scale \BASE to eight clusters to better balance computational throughput with memory bandwidth.
%as described in \S\ref{sec:alg:balance}.
This configuration was also explored in the original SHARP paper and achieves a lower/better energy-delay product (EDP).
To avoid fragmentation with more clusters, we adopt the overall data organization and parallelization strategy from CraterLake~\cite{isca-2022-craterlake}, which ensures a completely equal distribution of jobs across the clusters and the lanes.
As in CraterLake, we employ fixed-wire on-chip networks (NoCs) for this data organization.

\subsection{Bottleneck analysis}
\label{sec:arch:bottleneck}

F1-based architectures aim at maximizing NTTU utilization, as NTT is the dominant polynomial operation in FHE and entails both heavy computation and costly chip-wide data movement.
%, F1-based architectures are designed to maximize NTTU utilization.
By adopting a very long instruction word (VLIW) architecture, they overlap other polynomial operations with NTT execution to hide their latency.

Nevertheless, NTTUs still exhibit frequent idle periods.
Using our simulation framework (\S\ref{sec:eval:setup}), we analyze the sources of NTTU idle cycles in \BASE.
Fig.~\ref{fig:nttu-stall-breakdown} shows the analysis results.
We identify four primary causes:
%Using our simulation framework (\S\ref{sec:eval:setup}), we analyzed the causes of the NTTU idle cycles in \BASE, identifying its bottleneck.
%There are mainly four causes:
\begin{itemize}[leftmargin=*]
\item Off-chip memory bandwidth (BW) stall: NTTU waits for data from the off-chip memory.
\item Data dependency stall: NTTU waits for FU outputs.
\item Main scratchpad BW stall: All the ports in the main scratchpad are occupied by the EWE or the AutoU.
\item NTT/INTT transition overhead: Switching between NTT and inverse NTT (INTT) introduces pipeline bubbles.
%due to structural hazards.
\end{itemize}

%
%(3) When NTT follows inverse NTT (INTT), or vice versa, we cannot fill the pipeline until the former ends (NTT/INTT transition overhead).
%The NTTU stalls are categorized into four major sources: (1) DRAM stalls occur when the operand limb is not prefetched into on-chip memory, forcing the processor to fetch data directly from DRAM. 
%(2) Data dependency stalls occur when the NTTU must wait for the results of preceding operations due to unresolved data dependencies.
%(3) NTT/INTT stalls occur during the transition between NTT and INTT operations. Since the NTTU supports both transforms within a shared pipeline, all in-flight operations must be drained before switching modes.
%(4) On-chip memory stalls occur when the EWE demands high bandwidth from the limited on-chip memory ports. The restricted port width and shared access structure cause contention, forcing the NTTU pipeline to stall until the required operands become available.
% example figure will be helpful for on-chip memory stall

For \BASE, off-chip memory bandwidth stalls dominate NTTU idle cycles, 70\% of which are eliminated by our techniques in \S\ref{sec:alg}.
%, which eliminate over 70\% of the NTTU idle cycles.
Once this bottleneck is resolved, data dependencies and the limited bandwidth of on-chip memory (main scratchpad) emerge as new bottlenecks.
%
%Fig.~\ref{fig:nttu-stall-breakdown} shows the impact of algorithm and parameter optimizations (Opts.) described in \S\ref{sec:alg} on NTTU stalls during \textboot.
%As shown in Table~\ref{tab:cts-analysis}, the reduced \textevk and plaintext sizes decrease DRAM stalls by 95\%, shifting the performance bottleneck from off-chip DRAM to on-chip memory, which now accounts for 60\% of total stalls.
% need to revise this closing later to better fit the subsequent sections..

In the following sections, we propose two architectural refinements to address these issues.
%this shift, aiming to alleviate on-chip memory stalls and enhance data reuse efficiency. 

\subsection{KeyMult buffer \& reduced on-chip memory}
\label{sec:arch:memory}

While prior work has focused on NTT and BConv,
%which are major contributors to overall computational complexity (see Fig.~\ref{fig:level:compute}),
element-wise operations dominate on-chip memory bandwidth usage.
%While the EWE can occupy all the ports of the main scratchpad (see Fig.~\ref{fig:arch}), the provided bandwidth 
Specifically, multiplying a polynomial ($d$) with an \textevk, referred to as KeyMult, is a frequent, memory-intensive step in every $\mathtt{KS}$.
KeyMult is defined as:
\begin{equation}
\label{eq:keymult}
\text{KeyMult:\ } (a,b)\gets\Big(\sum_{i=0}^{\beta-1}d\cdot\evk{}[0][i], \sum_{i=0}^{\beta - 1} d\cdot\evk{}[1][i]\Big)
\end{equation}
An \textevk is composed of $2\times\beta$ (\eg $\beta=4$~\cite{rsa-2020-better}) polynomials.
Half of them ($\evk{}[0][i]$) are deterministic random polynomials that can be generated by a hardware pseudo-random number generator (PRNG)~\cite{isca-2022-craterlake}.
Due to PRNG throughput and scratchpad bandwidth limits, KeyMult is executed iteratively over $\beta$ steps in the EWE, computing $(a,b)\gets(a+d\cdot\evk{}[0][i], b + d\cdot\evk{}[1][i])$.

As the $(a,b)$ data is frequently accessed for output accumulation, we introduce KeyMult buffer, a dedicated on-chip memory for the $(a,b)$ data.
This separate hardware buffer provides additional input/output ports to the FUs (see Fig.~\ref{fig:arch}), boosting the aggregate on-chip memory bandwidth.

To offset the KeyMult buffer's hardware overhead, we reduce the main scratchpad size.
The smaller fg-CtS working set enables this with minimal performance impact.
Based on design space exploration, we shrink the scratchpad from 180MiB in SHARP~\cite{isca-2023-sharp} to 128MiB in \NAME.
The 128MiB scratchpad plus 32MiB KeyMult buffer is sufficient to hold the CtS working set (Eq.~\ref{eq:working-set} with $\text{\#baby}=1$ for fg-CtS), leaving capacity for aggressive software prefetching.
%With fg-CtS setting $\text{\#baby}=1$ in Eq.~\ref{eq:working-set}, the BSGS working set consists of two ciphertexts, one \textevk, and $\text{WS}_\mathtt{KS}$.
%One ciphertext is handled by the KeyMult buffer, and $\text{WS}_\mathtt{KS}$ resides in the BConv buffer, leaving the main scratchpad to store only one ciphertext and one \textevk\ (<84MiB under our parameters).
%The remaining capacity is used to aggressively prefetch upcoming data, typically \textevks.

%128MiB is sufficient to hold the BSGS working set with fg-CtS ($\text{\#baby}=1$ in Eq.~\ref{eq:working-set} $\rightarrow$ )

%Using fg-CtS effectively sets $\text{\#baby}=1$ for Eq.~\ref{eq:working-set}.

%
%The original 180MiB main scratchpad in SHARP~\cite{isca-2023-sharp} was designed to handle 36-bit words.
%Retaining the same number of entries for our 32-bit words would decrease it to 160MiB.
%We carve out 32MiB (equal to the KeyMult buffer size) from the main scratchpad, further reducing its capacity to 128MiB.
%The reduced capacity saves energy for memory access.
%While the main usage of the KeyMult buffer is for the EWEs, it can be also accessed by the NTTUs and the AutoUs (see Fig.~\ref{fig:arch}).

The KeyMult buffer lowers the average energy per memory access and eliminates most of the main scratchpad bandwidth stalls (Fig.~\ref{fig:nttu-stall-breakdown}).
%However, once bandwidth stalls are removed, data dependency stalls—previously hidden—become more pronounced.
However, data dependency stalls, previously masked by bandwidth stalls, become more pronounced.

\begin{figure}
    \centering
    \includegraphics[width=0.99\columnwidth]{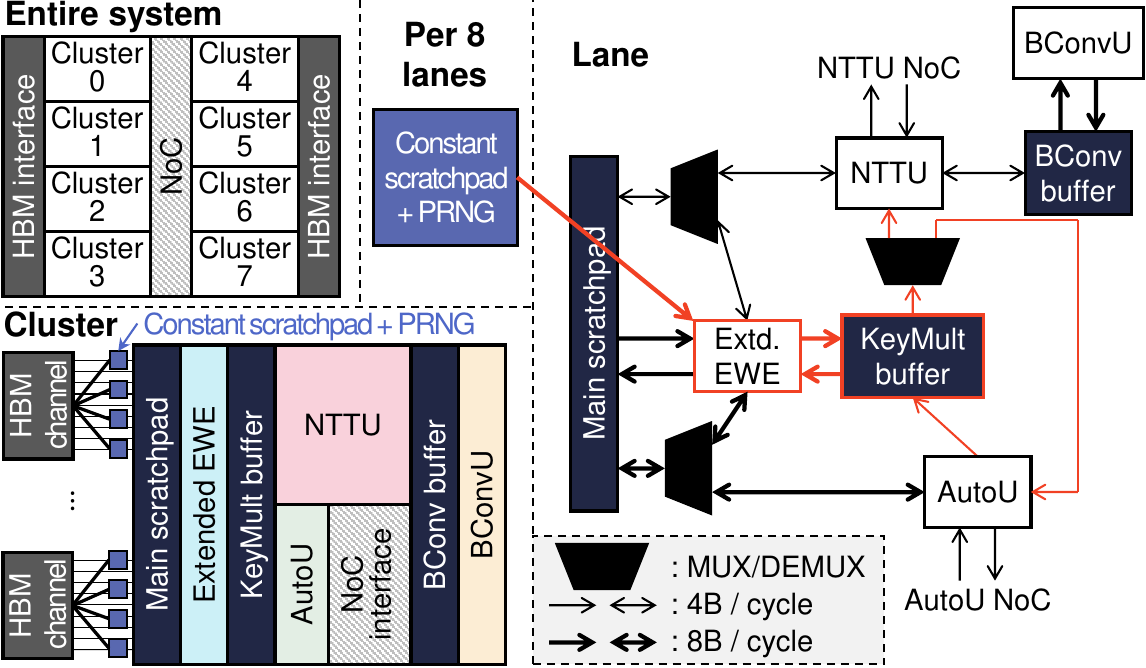}
    \Description{}
    \caption{Applying \NAME to an accelerator with eight clusters, each with $\sqrt{N}=256$ vector lanes and functional units for NTT (NTTU), BConv (BConvU), automorphism (AutoU), and element-wise operations (EWE). \NAME's architectural refinements are highlighted in red.}
    \label{fig:arch}
\end{figure}

\begin{figure}[tb!]
    \centering
    \includegraphics[width=0.967\columnwidth]{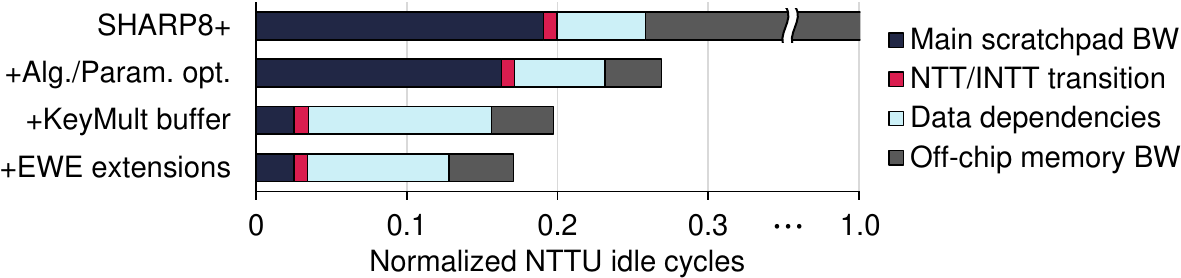}
    \Description{}
    \caption{Breakdown of the NTTU idle cycles during \textboot. We start from \BASE using the baseline parameters in Table~\ref{tab:params}.
    We gradually add our algorithmic and parameter optimizations (\S\ref{sec:alg}), KeyMult buffer, and EWE extensions.}
    \label{fig:nttu-stall-breakdown}
\end{figure}

\subsection{Element-wise engine (EWE) extensions}
\label{sec:arch:ewe}

Most data dependency stalls originate from waiting on the EWEs.
The EWE design of SHARP deploys four modular MAD (MMAD) units per lane, which can collaborate to execute various compound element-wise instructions.
However, the EWEs are seldom fully utilized due to the limited bandwidth of the main scratchpad.
For instance, executing four independent MMADs would require loading 12 operands and storing 4 results per lane per cycle, whereas the main scratchpad can supply only 7 (see Fig.~\ref{fig:arch}).
Therefore, discovering compound instructions that share operands is crucial for maximizing the effective throughput of the EWEs.

%Besides modifying the EWE to fetch operands from the KeyMult buffer,
We leverage the KeyMult buffer's additional bandwidth to propose the following instruction extensions for the EWE:
%To keep the control logic simple, we introduce only two new instructions:
%
\begin{enumerate}[leftmargin=*]
\item $d_\mathrm{res}\gets(p\cdot a + a')$ \&\newline $(a_\mathrm{res}, b_\mathrm{res})\gets (d_\mathrm{res} \cdot \evk{}[0][i], d_\mathrm{res} \cdot \evk{}[1][i] + (p \cdot b + b'))$
\item $a_\mathrm{res}\gets(C\cdot a - a')\cdot C'$ for constants $C$ and $C'$
\end{enumerate}
The first instruction combines $\mathtt{PMult}$ and $\mathtt{HAdd}$ with KeyMult by leveraging all four MMAD units in an EWE.
It is heavily used for evaluating Eq.~\ref{eq:lt} for fg-CtS or other linear transform evaluations.
Here, the KeyMult buffer handles the input $b'$ and the output $(a_\mathrm{res}, b_\mathrm{res})$.
The second instruction implements an operation fusion technique in Cheddar~\cite{asplos-2026-cheddar}.

While these two extensions eliminate 23\% of the data dependency stalls for the NTTUs (see Fig.~\ref{fig:nttu-stall-breakdown}),
%shows that the EWE extensions eliminate 23\% of the data dependency stalls for NTTUs.
%It is challenging to remove the remaining 77\% due to intermittent element-wise-operation-heavy sequences %(\eg BSGS linear transform still used for StC).
our focus is on demonstrating potential architectural efficiency improvements from such instruction extensions, which only incur minimal hardware overhead from slight control logic complications.
As shown in Table~\ref{tab:area}, \NAME's two EWE extensions add only 0.4mm\textsuperscript{2} to a total chip area of $\sim$190mm\textsuperscript{2}.
Future work could explore additional EWE extensions based on common computational patterns.

%It is fundamentally challenging to remove the remaining 77\%.
%For example, a long sequence consisting of dozens or even hundreds of $\mathtt{PMult}$ and $\mathtt{HAdd}$ function evaluations is found in BSGS linear transforms (Eq.~\ref{eq:bsgs}), which are still used for StC and HE-based ML workloads.
%an FHE application may include a long sequence of function evaluations consisting solely of $\mathtt{PMult}$ and $\mathtt{HAdd}$, which is, in fact, common in BSGS linear transform evaluations .
%contain such sequences.
%During such sequences, only the EWEs are active and the resulting NTTU idle cycles are recorded as data dependency stalls.
%Further reductions necessitate higher element-wise operation throughput, which is constrained by memory bandwidth.

%While we show the 

%It is fundamentally challenging to eliminate the remaining 77\%.
%For example, an FHE application may include a long sequence of function evaluations that only comprises $\mathtt{PMult}$ and $\mathtt{HAdd}$; in fact, BSGS linear transform evaluations (Eq.~\ref{eq:bsgs}) include such sequences.
%Such a sequence will only utilize EWEs for a long period and the resulting NTTU idle cycles are recorded as data dependency stalls.
%Achieving further reductions necessitates higher element-wise operation throughput, which is constrained by memory bandwidth.

\subsection{Supporting the \NAME algorithms}
\label{sec:arch:compress}

%To support plaintext compression (\S\ref{sec:alg:compress}) and intermediate ModRaise (\S\ref{sec:alg:intmd}), we 

%lightweight architectural extensions.
%

We provide architectural support to maximize the benefits of plaintext compression and intermediate ModRaise.
First, a small (6MiB) scratchpad, implicitly used for constants in SHARP, is also employed to store compressed plaintexts.
%The constant scratchpad broadcasts the same value for a number of lanes and we can exploit the same datapath for compressed plaintexts.
Each group of eight lanes contain SRAM for the constant scratchpad, which broadcasts to all the eight lanes.
We exploit the same datapath for compressed plaintexts.
%the datapath for which we also exploit for compressed plaintexts.
Four groups of SRAM are connected to an HBM channel (or pseudo-channel), making each channel handle 32 lanes.
This design supports compression rates from 8$\times$ to 32$\times$ with minimal hardware overhead while enabling plaintexts to be loaded at full HBM bandwidth via all channels.
While it may be tempting to support higher compression rates, we consider it unnecessary due to diminishing returns.

%Each group of eight lanes includes a constant scratchpad that locally regenerates repeated constants in decompressed plaintexts.
%This enables on-the-fly reconstruction of NTT-domain plaintexts without redundant memory accesses, reducing the plaintext bandwidth demand by up to 8192$\times$ with low area overhead.

Further, intermediate ModRaise requires performing an RNS reconstruction with an input polynomial having 2--3 limbs, which recovers large coefficients from their residues modulo $\mathcal{Q}_i$.
To enable this, we extend the BConvU to incorporate the necessary reconstruction logic.

%The BConvU is reconfigured to integrate the Double-prime scaling unit (DSU) from SHARP~\cite{isca-2023-sharp}. Since the BConvU already employs MAD-based datapaths, it can efficiently perform the DSU's double-word accumulation required for ModRaise in the bootstrapping process. While the original DSU accumulates two primes during ModRaise, intermediate ModRaise introduces the need to accumulate three primes. To address this, we add dedicated logic within the existing BConvU path, effectively utilizing the original datapaths to support intermediate ModRaise.

%% file: evaluation.tex
\section{Evaluation}
\label{sec:eval}

\setlength{\tabcolsep}{3pt}
\begin{table}[t]
    \centering
    \caption{Hardware configuration and chip area breakdown: \BASE vs. \NAME.}
    \label{tab:area}
    {\small
    \begin{tabularx}{0.99\columnwidth}{c|LL}
        \toprule
         & \tablecenter{\textbf{\BASE}} & \tablecenter{\textbf{\NAME}}\\
        \midrule
        Overall & \multicolumn{2}{>{\hsize=\dimexpr2\hsize+2\tabcolsep+\arrayrulewidth\relax}X}{8 clusters with a total of 2048 vector lanes:
        \newline+ 8 NTTUs (31.25M NTT/s) w/ dedicated NoC
        \newline+ BConvU: $2\times6$ plain MAD units per lane (24576 GOPS)
        \newline+ 8 AutoUs (31.25M automorphism/s) w/ dedicated NoC
        \newline+ EWE: 4 MMAD units per lane (8192 GOPS)}\\
        \midrule
        SRAM & 180MiB (Main, 64TB/s) \newline+ 18MiB (BConv, 40TB/s) \newline+ 6MiB (Constant) & {128MiB (Main, 64TB/s) \newline+ 32MiB (KeyMult, 48TB/s) \newline+ 18MiB (BConv, 40TB/s) \newline+ 6MiB (Constant)}\\
        \midrule
        HBM & 1024GB/s (2 HBM3 stacks) & 1024GB/s (2 HBM3 stacks)
    \end{tabularx}
    \begin{tabularx}{0.99\columnwidth}{@{\hspace{2pt}}l@{\hspace{2pt}}|@{\hspace{2pt}}C@{}C@{}C@{}C@{}C@{}C@{}C@{}C@{\hspace{2pt}}|@{\hspace{2pt}}c@{\hspace{2pt}}}
\midrule
    \multicolumn{10}{c}{\textbf{Area (mm\textsuperscript{2})}}\\
\midrule
    & NTT & BConv & Auto & EWE & PRNG & NoC & SRAM & HBM & Total\\
\midrule
    SH8+ & 31.4 & 23.8 & 6.8 & 6.3 & 2.3 & 16.6 & 76.2 & 29.6 & 193.1\\
    \NAME & 31.4 & 23.8 & 6.8 & 6.7 & 2.3 & 16.6 & 73.0 & 29.6 & 190.2\\
\bottomrule
    \end{tabularx}
    }
\end{table}
\setlength{\tabcolsep}{6pt}

\subsection{Experimental setup and area}
\label{sec:eval:setup}

To evaluate the performance of \NAME, we developed a simulator (and scheduler) for our architecture, which runs at a minimum granularity of eight cycles.
Upon receiving a list of instructions to process,
%Each instruction has a target FU and its duration is at a granularity of $\sqrt{N}/(\text{\#cluster})$ cycles.
our simulator maps the instructions to the FUs while avoiding structural and data hazards.
We adopted a decoupled data orchestration~\cite{asplos-2019-decoupled}, following CraterLake~\cite{isca-2022-craterlake}.
After the timing of all the instructions are determined, our simulator attempts to produce the best data scheduling between the off-chip memory and the on-chip memory using the Belady's MIN policy~\cite{ibm-1966-belady}.
The simulator uses software prefetching to prepare data before each instruction.
If prefetching before the desired timing fails from insufficient time, the simulator inserts static deterministic stalls.
%When data prefetching before the desired timing fails for an instruction, the system is stalled.
Our simulator collects statistics of the system for the given list of instructions and calculates utilization and energy consumption of each resource.

For energy and area estimation, we synthesized the core logic units in RTL using the ASAP7 7.5-track 7nm predictive process design kit (PDK)~\cite{mj-2016-asap7}.
Most hardware components operate at 1GHz but the NoCs and a few small buffers run at 2GHz.
We estimated the energy for mm-scale global wires based on Dally et al.~\cite{vlsic-2018-wire}.
SRAM components were modeled using FinCACTI~\cite{isvlsi-2014-fincacti}, which we vastly modified to reflect published data for 7nm technologies~\cite{isscc-2017-7nm-sram, iedm-2017-gf7nm, vlsit-2018-samsung, isscc-2018-7nm-sram-euv, iedm-2016-tsmc7nm, whitepaper-2018-irds, isca-2021-tpuv4i}.
All SRAM components are single-ported; we used bank interleaving to achieve high bandwidth at the cost of coarser access granularity.
Two HBM3 stacks, each with 512GB/s of bandwidth~\cite{jedec-2022-hbm3}, were employed, whose energy and PHY area were estimated using prior studies~\cite{micro-2017-finegrainedDRAM, isca-2021-tpuv4i}.

Table~\ref{tab:area} summarizes the hardware configuration and area breakdown of \NAME and \BASE.
Owing to minimal architectural overheads and reduced on-chip memory, \NAME reduces the total chip area from 193.1mm\textsuperscript{2} to 190.2mm\textsuperscript{2}.

%All SRAM components are single-ported and run double-pumped at 2GHz~\cite{isscc-2021-16t} providing 1R1W per cycle.

\subsection{Workloads and parameters}
\label{sec:eval:workload}

We used the following four workloads, which have been widely used in prior work:
\begin{itemize}[leftmargin=*]
    \item \textbf{\textboot} denotes bootstrapping with $2^{15}$ complex numbers.
    \item \textbf{HELR}~\cite{aaai-2019-helr} performs a binary classification model training using logistic regression on $14\!\times\!14$ grayscale images for 32 iterations, each with a batch size of 1,024.  
    \item \textbf{Sorting}~\cite{tifs-2021-sorting} is an implementation of a two-way bitonic sorting network for $2^{14}$ real numbers.
    \item \textbf{ResNet-20 (RN-20)} is an implementation of CNN inference by Lee et al.~\cite{icml-2022-resnet} on a $32\times 32 \times 3$ CIFAR-10~\cite{techreport-2009-cifar10} image using the ResNet-20~\cite{cvpr-2016-resnet} model.
\end{itemize}

We used the baseline/optimized parameters listed in Table~\ref{tab:params} for \BASE/\NAME.
The respective \textefflevel values for \textboot are 12 and 9.
For the other workloads, \textefflevel is adjusted as appropriate (8--10) based on each workload's requirement.
For HELR, a much smaller $2^8\times2^8$ CtS matrix can be used instead of the full $\frac{N}{2}\times\frac{N}{2}$ matrix.
Thus, we used fewer matrices to decompose it, resulting in a 4-level CtS even for fg-CtS.

\setlength{\tabcolsep}{2pt}
\begin{table}[t]
    \centering
    \caption{Performance comparison with prior ASIC FHE accelerators. We also calculate area-delay product (ADP), calculate its relative value compared to that of \NAME, and show its geometric mean across the workloads (vs. \NAME). }
    \label{tab:eval-workloads}
    {
    \small
    \begin{tabularx}{0.99\columnwidth}{l|RRRR|rr}
       \toprule
       \multirow{2}{*}{Hardware} & \tablecenter{\textboot}  & \tablecenter{HELR} & \tablecenter{Sorting} & \multicolumn{1}{c|}{RN-20} & \tablecenter{Area} & \tablecenter{vs. \NAME} \\
        & \tablecenter{(ms)} & \tablecenter{(ms/it)} & \tablecenter{(s)} & \multicolumn{1}{c|}{(ms)} & \tablecenter{(mm\textsuperscript{2})} & \tablecenter{(1 / ADP)} \\
        \midrule
        GPU\textsuperscript{$\dagger$} & 19.8{\color{white}00} & 25.5{\color{white}0} & \tablecenter{*-} & 720{\color{white}.0} & \tablecenter{-} & \tablecenter{-}\\        
       \midrule
         %Taiyi\textsuperscript{\dag} & 7.61 & 4.14 & - & 162.6 & 151.6 \\
         % Only include those accepeted to ASPLOS/ISCA/HPCA/MICRO
         CrLake~\cite{isca-2022-craterlake} & 6.33{\color{white}0} & *3.81 & \tablecenter{*-} & *321{\color{white}.0} & 223 & {\color{white}0}8.08$\times$\\
         ARK~\cite{micro-2022-ark} & 3.56{\color{white}0} & 7.42 & 1.99{\color{white}0} & 125{\color{white}.0} & 418 & {\color{white}0}8.74$\times$\\
         SHARP~\cite{isca-2023-sharp} & 3.12{\color{white}0} & 2.53 & 1.38{\color{white}0} & 99.0 & 179 & {\color{white}0}2.38$\times$\\
         UFC~\cite{micro-2024-ufc} & 2.64{\color{white}0} & 2.10 & 1.16{\color{white}0} & *87.6 & 198 & {\color{white}0}2.21$\times$ \\
         Trinity~\cite{micro-2024-trinity} & 1.92{\color{white}0} &  1.37 & \tablecenter{*-} & 89{\color{white}.0} & \textsuperscript{$\ddagger$}157 & {\color{white}0}1.38$\times$ \\
         FAST~\cite{isca-2025-fast}  & 1.38{\color{white}0} & 1.33 & \tablecenter{*-} & 60.5 & 284 & {\color{white}0}1.95$\times$\\
         HAWK~\cite{micro-2025-hawk}  & 2.15{\color{white}0} & 1.96 & 0.98{\color{white}0} & 74.9 & 186 & {\color{white}0}1.81$\times$\\
         \midrule
         \BASE &  2.50{\color{white}0} & 1.85 & 1.06{\color{white}0} & 75.6 &  193  & {\color{white}0}2.01$\times$\\
         \NAME &  0.915 & 1.40 & 0.457 & 38.7 & 190 & {\color{white}0}1.00$\times$ \\
       \bottomrule
    \end{tabularx}
    }
    {
    \footnotesize
    \begin{itemize}[leftmargin=*]
    \item[*] We excluded these from the ``vs. \NAME'' geometric mean calculation due to substantial differences in workload implementations or missing data.
    \item[$\dagger$] Based on Cheddar~\cite{asplos-2026-cheddar} on RTX 5090, which showed the best performance.
    \item[$\ddagger$] Trinity's 36-bit NTTUs occupy 39.5\% of the area of our 32-bit NTTUs at the same throughput, and its 180MiB main scratchpad area is 64.1\% of that of \BASE, implying substantial differences in evaluation settings.
    \end{itemize}
    }
\end{table}
\setlength{\tabcolsep}{6pt}

\subsection{\NAME vs. prior ASIC accelerators}
\label{sec:eval:vs-prior}

Table~\ref{tab:eval-workloads} demonstrates that \NAME achieves significant improvements in performance compared to state-of-the-art FHE accelerators~\cite{isca-2022-craterlake, micro-2022-ark,isca-2023-sharp,micro-2024-trinity,micro-2024-ufc,micro-2025-hawk,isca-2025-fast}.
Also, compared to the GPU performance reported in Cheddar~\cite{asplos-2026-cheddar} with an NVIDIA RTX 5090, \NAME demonstrates 18.2--21.6$\times$ performance improvements.

Notably, we achieve the first-ever \textbf{sub-millisecond CKKS bootstrapping} ($\efflevel=9$), which corresponds to a 1.51$\times$ acceleration over the previous state-of-the-art (FAST~\cite{isca-2025-fast}).
%Compared to \BASE, \NAME demonstrates 2.73$\times$, 1.32$\times$, 2.32$\times$, and 1.95$\times$ speedups for \textboot, HELR, sorting, and ResNet-20.
Compared to prior ASIC accelerators, \NAME achieves 2.14--4.35$\times$ and 1.56--3.23$\times$ speedups for sorting and ResNet-20, respectively.
As our optimizations primarily target \textboot (specifically, CtS), \NAME shows smaller improvements on HELR, which contains a lower \textboot portion (see Fig.~\ref{fig:sensitivity:workload}) and utilizes a reduced CtS matrix size.

When also accounting for area, \NAME shows 1.38--8.74$\times$ reductions in area-delay product (ADP), measured as the geometric mean across the workloads.
While on-chip memory accounts for the largest area portion (38.4\%) in \NAME, its total capacity (184MiB, see Table~\ref{tab:area}) is smaller compared to recent accelerators, such as UFC (274MiB), FAST (281MiB) and HAWK (212MiB), allowing \NAME to achieve a compact area while maintaining high performance.

\begin{figure}
    \centering
    \includegraphics[width=0.99\columnwidth]{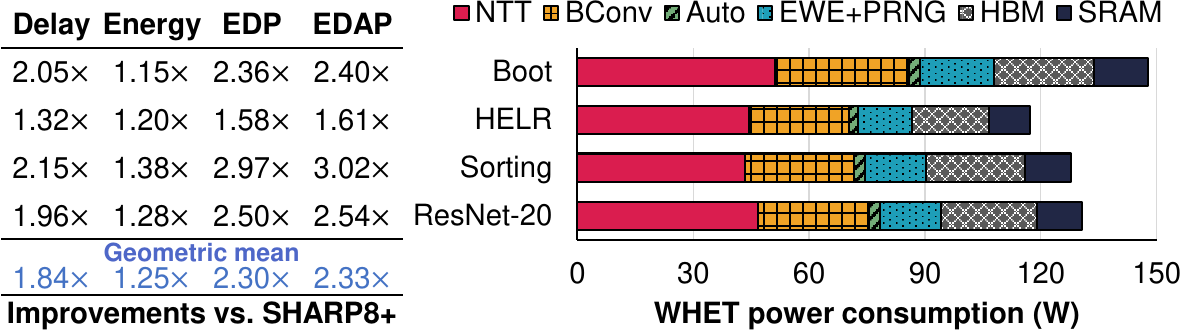}
    \Description{}
    \caption{(Left) Delay, energy, energy-delay product (EDP), and energy-delay-area product (EDAP) improvements from applying \NAME on \BASE. (Right) Average power consumption breakdown of \NAME for the workloads. NTT and automorphism include NoC power.}
    \label{fig:power}
\end{figure}

\begin{figure*}
    \centering
    \subfloat[Time \& energy (\textboot)]{\includegraphics[width=0.435\columnwidth]{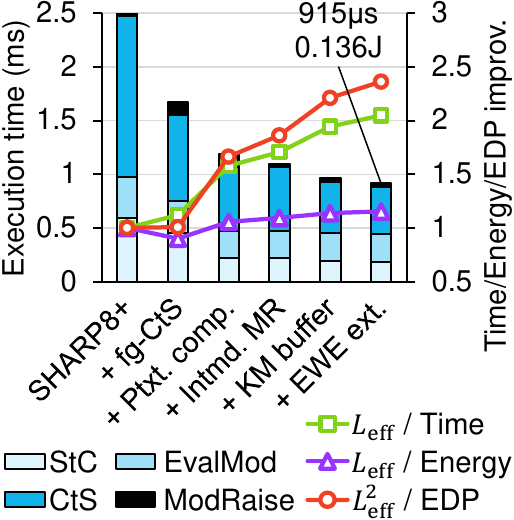}\label{fig:sensitivity:boot}\Description{}}
    \quad
    \subfloat[Time \& energy (workloads)]{\includegraphics[width=0.955\columnwidth]{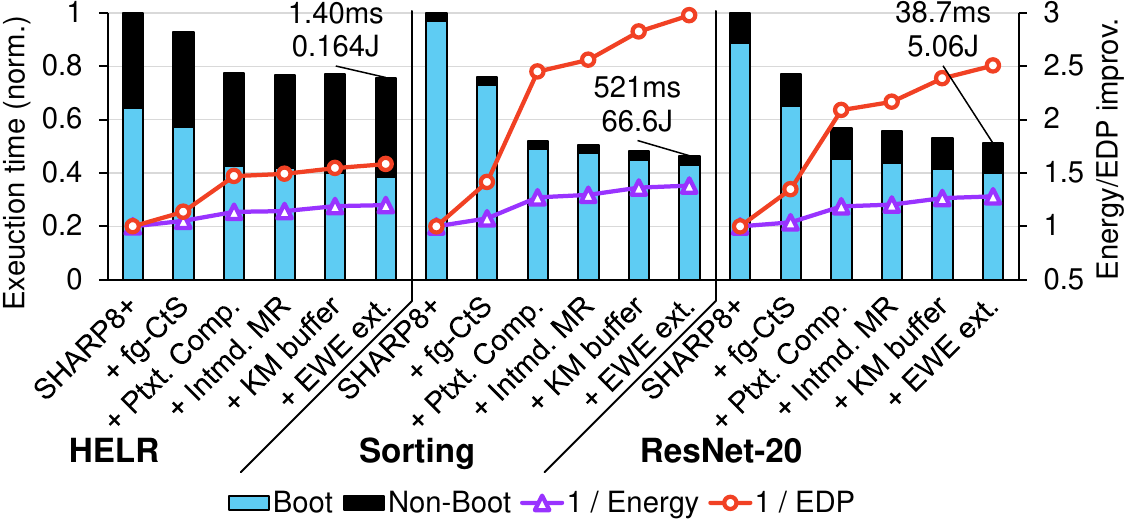}\label{fig:sensitivity:workload}\Description{}}
    \quad
    \subfloat[Utilization (\textboot)]{\includegraphics[width=0.54\columnwidth]{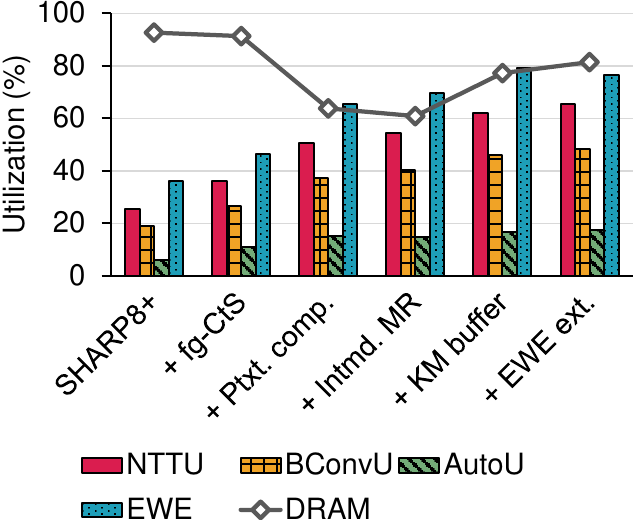}\label{fig:sensitivity:util}\Description{}}
    \vspace{-0.08in}
    \caption{Execution time, energy consumption, and energy-delay product (EDP) of (a) \textboot and (b) the other three workloads when we incrementally apply fg-CtS (+ fg-CtS), plaintext compression (+ Ptxt. Comp.), intermediate ModRaise (+ Intmd. MR), KeyMult buffer (+ KM buffer), and EWE extensions (+ EWE ext.). \textboot time and energy are divided by \textefflevel for a fair comparison. (c) Utilization of the hardware resources depending on the configuration during \textboot is also shown.
    To reflect the fact that lower is better for energy and EDP, we plot their reciprocals.
    }
    \label{fig:sensitivity}
\end{figure*}

\begin{figure*}
    \centering
    \subfloat[Number of clusters]{\includegraphics[width=0.36\columnwidth]{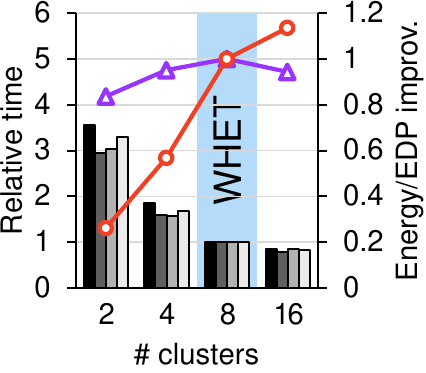}\label{fig:sweep:cluster}\Description{}}
    \quad
    \subfloat[HBM bandwidth]{\includegraphics[width=0.515\columnwidth]{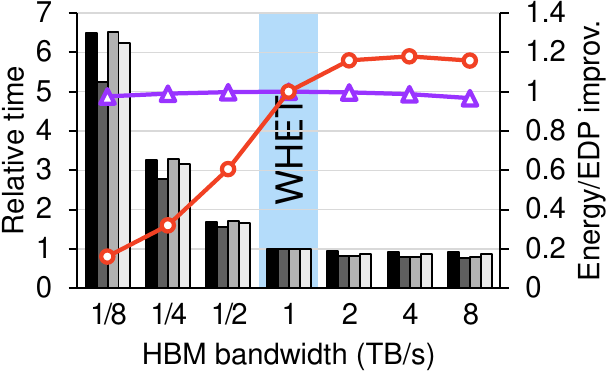}\label{fig:sweep:hbm}\Description{}}
    \quad
    \subfloat[Main scratchpad size]{\includegraphics[width=1.03\columnwidth]{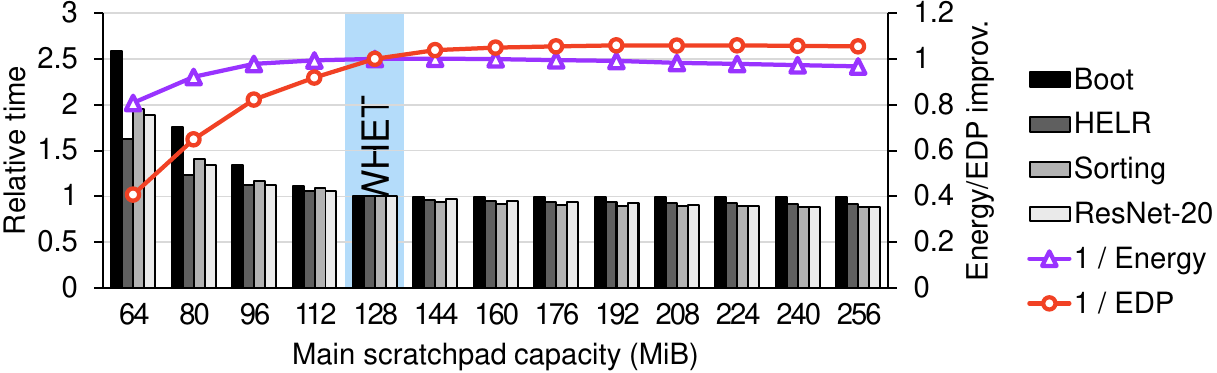}\label{fig:sweep:sram}\Description{}}
    \vspace{-0.05in}
    \caption{Execution time, energy, and energy-delay product (EDP) under varying (a) cluster count (total on-chip memory capacity kept constant), (b) HBM bandwidth, and (c) main scratchpad capacity.}
    \label{fig:sweep}
\end{figure*}

\subsection{Energy and efficiency of \NAME (vs. \BASE)}
\label{sec:eval:power}

Fig.~\ref{fig:power} (right) shows the power consumption breakdown of \NAME across the workloads.
\NAME consumes 117.1--147.7W of power in total.
While differences in evaluation settings make direct comparison difficult, the power consumption of \NAME is within a comparable range with prior ASIC proposals, including SHARP (97W for \textboot), Trinity (229W peak), and FAST (120--160W for the workloads).
NTT is the largest contributor, accounting for 33.9--38.0\% of the total power.
This is consistent with Fig.~\ref{fig:level:compute}, where NTT accounts for over half of the computational complexity in $\mathtt{KS}$.
Next, the memory system (HBM + SRAM) accounts for 26.0--27.9\%, highlighting the importance of our memory optimizations.

As shown in Fig.~\ref{fig:power} (left), we compare \NAME with a controlled baseline, \BASE, which ensures a fair evaluation of hardware efficiency.
\NAME achieves geometric mean reductions of 1.25$\times$ in energy, 2.30$\times$ in energy-delay product (EDP), and 2.33$\times$ in energy-delay-area product (EDAP~\cite{micro-2009-mcpat}).
%reduction in terms of the geometric mean.
Combined with 1.32--2.15$\times$ speedups, this demonstrates that \NAME enables efficient execution of various FHE workloads.

%The following subsection analyzes these benefits of \NAME over \BASE in more detail.

\subsection{Sensitivity study: from \BASE to \NAME}
\label{sec:eval:vs-base}

For a more detailed analysis of performance, energy, and hardware utilization of \NAME compared to \BASE,
%The use of a controlled baseline, \BASE, ensures a fair comparison.
%we used a controlled baseline based on SHARP8+.
we evaluate the impact of our contributions (\S\ref{sec:alg} and \S\ref{sec:arch}) by incrementally incorporating each optimization on \BASE and measuring its effects.
The results are shown in Fig.~\ref{fig:sensitivity}.

Our cryptographic and algorithmic optimizations result in significant improvements for \BASE with combined benefits of 1.30--2.27$\times$ speedups and 1.09--1.29$\times$ energy reductions.
%As these optimizations are targeted at \textboot, workloads with higher portions of \textboot experienced greater improvements.
Despite the reduced \textefflevel (12 to 9), fg-CtS delivers a 1.65$\times$ speedup in combined ModRaise+CtS execution time as well as 1.12$\times$ faster per-\textefflevel \textboot time.
This translates to 1.08--1.32$\times$ speedups for the other three workloads.
Plaintext compression reduces both execution time and energy consumption, resulting in 1.29--1.73$\times$ EDP improvements.
Intermediate ModRaise accelerates ModRaise by 3.81$\times$, yielding an overall 8.2\% \textboot performance improvement.

The architectural refinements result in additional 1.06--1.27$\times$ EDP improvements.
Introduction of the KeyMult buffer and the EWE extensions respectively achieve 4.2--5.3\% and 0.8--1.4\% energy savings.
Together, they result in 1.01--1.20$\times$ speedups for the workloads.
% Meanwhile, the non-\textboot portion of HELR experienced a 7.3\% slowdown, primarily due to the reduced main scratchpad size (180MiB $\rightarrow$ 128MiB) accompanying the introduction of the KeyMult buffer in our experiments.
% HELR operates with a relatively large working set at lower levels, which could benefit from a larger main scratchpad capacity.

Fig.~\ref{fig:sensitivity:util} shows that our optimizations deliver consistent improvements in FU utilization.
In SHARP8+, all FUs are severely underutilized due to the off-chip memory bandwidth bottleneck, with all utilization rates below 36.2\%.
With our optimizations applied, utilization increases to 65.3\%, 48.4\%, and 76.5\% for the NTTU, BConvU, and EWE, respectively.
We also achieve 81.3\% HBM utilization, indicating that \NAME maintains a well-balanced use of computational resources and memory bandwidth.

\subsection{Alternative \NAME hardware configurations}
\label{sec:eval:alter}

We also tested how the number of clusters, HBM bandwidth, and main scratchpad capacity affects performance and efficiency of \NAME (Fig.~\ref{fig:sweep}).
% The results are shown in Fig.~\ref{fig:sweep}.
% 
Increasing any single component exhibits diminishing returns; performance plateaus near the values currently employed in \NAME, which are eight, 1024GB/s, and 128MiB.
Doubling each value results in only 1.20$\times$, 1.16$\times$, and 1.09$\times$ performance improvements in geometric mean.
Conversely, the current configuration shows 1.67$\times$, 1.65$\times$, and 1.99$\times$ enhanced performance when compared to a configuration with each value halved.
This again indicates that the \NAME design is well balanced.
%achieves a balance between the computational and memory resources.

%The current \NAME configuration appears to be a sweet spot for energy efficiency; adjusting any single component does not yield a configuration with lower energy consumption.
%Yet, deploying more hardware resources can provide higher performance at the cost of increased energy consumption and area, with up to 17.9\% enhancements in EDP.

\noindent \textbf{Scaling \NAME.} Scaling \NAME requires increasing computational throughput and memory bandwidth together.
For scale-up, HBM bandwidth can be increased to several TB/s while adding more clusters to form a larger chip, similar to recent NVIDIA GPU designs.
For scale-out, prior multi-chip FHE acceleration efforts can be leveraged, such as CiFHER (custom ASIC chiplets)~\cite{seed-2024-cifher}, Cinnamon (custom ASICs)~\cite{asplos-2025-cinnamon}, Cerium (NVIDIA GPUs)~\cite{arxiv-2025-cerium}, and CROSS (Google TPUs)~\cite{hpca-2026-cross}.

\subsection{A wider range of CNN workloads}
% To estimate

We evaluate the practicality of \NAME using a more diverse set of CNN models implemented with the state-of-the-art FHE framework Orion~\cite{asplos-2025-orion}.
We follow Orion’s implementation faithfully, incorporating only Min-KS to reduce \textevk usage.
%Simulations are performed on both \BASE and \NAME.
\NAME is evaluated with two main scratchpad configurations: the default 128MiB and an expanded 180MiB.
The results are shown in Table~\ref{tab:eval-orion} (page 2).

Across ResNet-20~\cite{cvpr-2016-resnet} (CIFAR-10~\cite{techreport-2009-cifar10}), MobileNet~\cite{arxiv-2017-mobilenets} (Tiny ImageNet~\cite{tiny-imagenet}), and ResNet-18 (Tiny ImageNet), the default 128MiB \NAME achieves 1.27--1.61$\times$ speedups and 1.17--1.27$\times$ energy reductions over \BASE.

VGG-16~\cite{iclr-2015-vgg} (CIFAR-10) presents a different behavior.
Orion’s implementation involves extremely large linear transforms, making BSGS unavoidable and requiring storage of hundreds of baby-step ciphertexts.
Under this workload, the reduced 128MiB scratchpad becomes a bottleneck.
Increasing the scratchpad to 180MiB restores performance, yielding a 1.24$\times$ speedup for \NAME.

Overall, \NAME substantially narrows the gap between encrypted and unencrypted inference.
Orion reports over 30,000$\times$ slowdown on a single-threaded CPU compared to plaintext inference.
\NAME reduces this gap to 4.74–31.6$\times$, demonstrating that FHE can approach practical performance levels for real-world applications with more effective use of FHE accelerator architectures.

\noindent\textbf{Capacity limitations.} Larger CNNs require improved execution strategies for accelerators.
Even with HBM4~\cite{jedec-2025-hbm4}, two HBM stacks provide at most 128GiB of capacity, whereas Orion materializes hundreds of GiBs of plaintexts for large inputs such as ImageNet~\cite{cvpr-2009-imagenet}.
As prior FHE accelerators have very similar off-chip memory system designs, it would be compelling to use alternative FHE CNN implementations~\cite{ccs-2024-neujeans, tifs-2023-coeff, access-2024-hyphen} that reduce the number of plaintexts.

%% file: related.tex
\section{Related Work}
\label{sec:related}

% {\color{blue} Just introduce ISCA/MICRO/HPCA/ASPLOS papers as there are plenty of papers now. Make the reviewers happy.}

%GPU solutions: WarpDrive, NEO, Cheddar, 100x, ...

We summarize previous architectural endeavors in accelerating HE, focusing on those targeted at FHE schemes based on the ring-learning-with-errors (RLWE) problem~\cite{eurocrypt-2010-rlwe}. 
%, such as CKKS.

Previous studies have attempted to enhance the utilization of GPU hardware resources for FHE.
TensorFHE~\cite{hpca-2023-tensorfhe} utilizes tensor cores in recent NVIDIA GPUs to accelerate NTT.
WarpDrive~\cite{hpca-2025-warpdrive} combines the use of tensor cores and regular CUDA cores.
%WarpDrive~\cite{hpca-2025-warpdrive} utilizes both 
%concurrently on Tensor Cores and CUDA cores, and
NEO~\cite{isca-2025-neo} redirects HE kernels to idle FP64 pipelines.
Jung et al.~\cite{tches-2021-100x} and Cheddar~\cite{asplos-2026-cheddar} apply kernel fusion to improve the memory bandwidth utilization.
However, limited on-chip memory capacity and bandwidth in GPUs remain as bottlenecks for FHE.

%Some FPGA solutions: Roy et al., HEAX, Poseidon, FAB, Hydra, EFFACT, ...
Numerous FPGA solutions have also been proposed.
Roy et al.~\cite{hpca-2019-roy} and HEAX~\cite{asplos-2020-heax} deploy spatial pipelines for the core primitives in HE without the support for \textboot.
FAB~\cite{hpca-2023-fab} and Poseidon~\cite{hpca-2023-poseidon} improve upon these studies and support full-slot CKKS \textboot.
Hydra~\cite{hpca-2025-hydra} takes a scale-out approach with multiple FPGAs to support large private ML workloads.
EFFACT~\cite{hpca-2025-effact} provides a full-stack platform with a compiler and a streaming memory controller for both FPGA and ASIC.

For memory-intensive HE workloads, several processing-in-memory (PIM) and near-data processing (NDP) architectures have been proposed.
Anaheim~\cite{hpca-2025-anaheim} offloads memory-bound operations to PIM while keeping compute-heavy kernels on GPUs. 
FHENDI~\cite{hpca-2025-fhendi} leverages an HBM-based NDP design for large FHE workloads. 
CIPHERMATCH~\cite{asplos-2025-ciphermatch} uses in-flash processing for bulk homomorphic additions.
%over large encrypted databases.
% Given the memory-intensive nature of HE workloads, several studies have adopted processing in memory (PIM) or near-data processing (NDP).
% Anaheim~\cite{hpca-2025-anaheim} combined the use of PIM with GPUs, offloading memory-bound element-wise operations to PIM while executing compute-intensive operations, such as NTT and BConv, on GPUs.
% FHENDI~\cite{hpca-2025-fhendi} targeted large-scale FHE workloads with substantial off-chip memory traffic and built an HBM-based NDP architecture with parallelism-aware mapping of different polynomial operations.
%that heavily modifies the interconnection and hierarchy to efficiently map different forms of parallelism in polynomial operations onto the DRAM hierarchy.
% CIPHERMATCH~\cite{asplos-2025-ciphermatch} presented a method to translate a private string matching problem to a sequence of homomorphic additions, computation of which was handled by in-flash processing with large encrypted databases.
%It adopts an in-flash processing (IFP) architecture to exploit the high internal bandwidth of flash memory and efficiently handle large encrypted databases.

F1~\cite{micro-2021-f1} is the first ASIC FHE accelerator but provides limited support for CKKS \textboot with $N\!=\!2^{14}$.
BTS~\cite{isca-2022-bts}, fully supporting CKKS \textboot with $N\!=\!2^{17}$, deploys 2,048 small processing units connected by a two-dimensional NoC.
%, featuring a massive vector processor architecture.
% To realize a \textboot oriented design, CraterLake~\cite{isca-2022-craterlake} enhances the F1 architecture with a fixed permutation network and PRNG generator which can halve the \textevk load, while BTS~\cite{isca-2022-bts} adopts a 2D mesh-based architecture, where 2,048 processing elements are interconnected through a NoC.
CraterLake~\cite{isca-2022-craterlake} realizes a \textboot-oriented design by creating a large cluster for $N\!=\!2^{16}$ based on F1.
%To realize a \textboot-oriented design, CraterLake~\cite{isca-2022-craterlake} enhances the F1 architecture, reducing on-chip communication and halving the amount of \textevk load, while BTS~\cite{isca-2022-bts} adopts a 2D mesh-based architecture where 2,048 processing elements are interconnected through a NoC considering data-access patterns.
ARK~\cite{micro-2022-ark} develops algorithmic optimizations, such as Min-KS, to mitigate off-chip memory bandwidth bottlenecks.
% SHARP~\cite{isca-2023-sharp} extensively explores the impact of word length and, through BSGS fine-tuning considering the working set size, reduces the SRAM capacity to 198MiB from 512MiB of ARK.
SHARP~\cite{isca-2023-sharp} extensively explores the impact of word size and suggests using 36 bits for FHE accelerators.
UFC~\cite{micro-2024-ufc}, Trinity~\cite{micro-2024-trinity}, and UniFHE~\cite{hpca-2026-unifhe} modify the architecture to support both RLWE-based (\eg CKKS) and non-RLWE-based (\eg TFHE~\cite{jc-2020-tfhe}) FHE schemes.
%and reduces the working set size by BSGS fine-tuning.
FAST~\cite{isca-2025-fast} and HAWK~\cite{micro-2025-hawk} exploit a new $\mathtt{KS}$ method, KLSS~\cite{crypto-2023-klss}, and add hardware support for KLSS on top of SHARP.
Intel recently announced the tape-out of HERACLES, a 197mm\textsuperscript{2} ASIC FHE accelerator in Intel 3 node~\cite{isscc-2026-heracles}.

Finally, there have been attempts to better adapt CKKS to hardware via efficient scheduling (MAD~\cite{micro-2023-mad}, CROPHE~\cite{hpca-2026-crophe}), cryptographic modification (BitPacker~\cite{asplos-2024-bitpacker}, Grafting~\cite{ccs-2025-grafting}), or \textboot-aware compiler development (DaCapo~\cite{usenix-2024-dacapo}, HALO~\cite{asplos-2025-halo}, ReSBM~\cite{asplos-2025-resbm}).
Also, several studies~\cite{seed-2024-cifher, asplos-2025-cinnamon, arxiv-2025-cerium, hpca-2026-cross} explore multi-chip scaling for FHE acceleration (\S\ref{sec:eval:alter}).

%FAST introduces dual word-length support (60-bit and 36-bit) to minimize computational cost, whereas HAWK employs a single 36-bit word design yet maintains comparable efficiency through algorithm-level optimization.
%However, KLSS increases \textevk sizes to require a larger on-chip memory capacity, which goes against the design principles of \NAME.

% By exploiting the observation that the two schemes share most of the compute kernels, they efficiently support both schemes with low area overhead.
% Trinity addresses low utilization caused by the imbalanced computation workloads of the two schemes by proposing a flexible compute unit design that supports NTT and MAC operations.
% Trinity mitigates utilization imbalance between two schemes by introducing a flexible compute unit that supports both NTT and MAC operations.
% UFC adopts a BTS-like architecture but reduces interconnect overhead via algorithm–hardware co-design, enabling a more scalable design.

%% file: conclusion.tex
\section{Conclusion}
\label{sec:conclusion}

We presented \NAME, which incorporate cryptographic and algorithmic techniques to better align FHE execution with accelerator architectures, along with lightweight architectural refinements to improve hardware utilization.
%co-design for hardware FHE acceleration that integrates algorithmic and parameter optimizations to alleviate the off-chip memory bandwidth bottleneck, along with architectural enhancements for efficient on-chip resource usage.
By analyzing the working set at the highest levels, \NAME introduces fg-CtS, plaintext compression, and intermediate ModRaise to substantially reduce data footprint and off-chip memory traffic during CKKS \textboot.
%\NAME incorporates 
Additional architectural enhancements, including the KeyMult buffer and extended instructions for element-wise operations, further alleviate on-chip memory bandwidth and data dependency bottlenecks.
%, which include the KeyMult buffer that relieves on-chip bandwidth pressure during key-switching and extended instructions for element-wise operations that improve FU utilization.
Overall, \NAME achieves 1.32--2.72$\times$ speedups over our controlled baseline, \BASE, and enables practical private CNN inference with 33.1--239ms latency on representative models.

%\NAME delivers 1.32--2.72$\times$ speedups over our controlled baseline, \BASE, and enables practical private CNN inference within 33.1--239ms of latency on popular models.

%\NAME achieves 1.38-9.74$\times$ better performance per area than state-of-the-art FHE accelerators, and delivers the first sub-millisecond CKKS bootstrapping.

%{\color{blue} If space allows, I would include a proof for intermediate ModRaise as an appendix here for the camera-ready version.}

%% file: appendix.tex
\section{Symbols}
\label{app:symbol}

Table~\ref{tab:notation} summarizes the notations and symbols used in this paper.
We use uppercase (\eg $N$) or Greek letters (\eg $\beta$) for constant numbers, except for iterators (\eg $i$).
Polynomials are denoted in lowercase, such as $a$ or $a(\mathcal{X})$ with the variable $\mathcal{X}$ specified.
Vectors (\eg $\mathbf{u}$) and matrices (\eg $\mathbf{M}$) are denoted in boldface.

\setlength{\tabcolsep}{4pt}
\begin{table}[th]
\caption{Notations and symbols.}
\label{tab:notation}
\begin{tabularx}{0.99\columnwidth}{lX}
\toprule
Symb. & Explanation\\
\midrule
$\ring$ & Polynomial ring $\mathbb{Z}_Q[\mathcal{X}]/(X^N+1)$.\\
$N$ & Degree of $\ring$ (typically, $2^{16}$).\\
$Q$ & Modulus of $\ring$.\\
$\mathcal{Q}_i$ & $i$-th prime sub-modulus of $Q$.\\
$P$  & Additional modulus for \textevks.\\
$\mathcal{P}_i$ &  $i$-th prime sub-modulus of $P$.\\
\textefflevel & Effective level: the level after \textboot.\\
$a$ & Polynomial in $\ring$.\\
$a(\mathcal{X})$ & Polynomial $a$ with its variable $\mathcal{X}$ specified.\\
$\ptxt{\mathbf{u}}$ & Plaintext encoding a vector $\mathbf{u}\in\mathbb{C}^{N/2}$. $\ptxt{\mathbf{u}}\in\ring$.\\
$\ctxt{\mathbf{u}}$ & Ciphertext encrypting $\ptxt{\mathbf{u}}$. $\ctxt{\mathbf{u}}\in\mathcal{R}_Q^2$.\\
\textevk & Evaluation key. $\evk{} \in\mathcal{R}_{PQ}^{2\times \beta}$.\\
$\beta$ & \textevk decomposition number ($\mathtt{dnum}$ in \cite{rsa-2020-better}).\\
$\Delta$ & Scale of a ciphertext, affecting precision and \textefflevel.\\
$D_\mathrm{tr}$ & CtS/StC matrix decomposition number.\\
\bottomrule
\end{tabularx}
\end{table}
\setlength{\tabcolsep}{6pt}

\section{Proof of Theorem~\ref{thm:compression}}
\label{app:proof}

First, we formally define CKKS encoding and NTT.
Then, we prove Theorem~\ref{thm:compression}.

\begin{definition}
\textnormal{\textbf{CKKS encoding}~\cite{sac-2018-frns-ckks}:}
Given a complex vector $\mathbf{u} \in \mathbb{C}^{N/2}$, its CKKS encoding is a plaintext polynomial $u = \ptxt{\mathbf{u}} \in \mathcal{R}_Q$ obtained by polynomial interpolation such that
\[
\forall\, 0 \le j < \frac{N}{2}, \quad u(\zeta^{5^j}) = \mathbf{u}[j],
\]
where $\zeta = e^{\pi i / N}$ is a primitive $2N$-th root of unity in $\mathbb{C}$.
\end{definition}

\begin{definition}
\textnormal{\textbf{Number-theoretic transform (NTT)}:}
Let $u_{\mathcal{Q}_i} \in \mathcal{R}_{\mathcal{Q}_i}$ denote a limb of a polynomial $u \in \ring$. 
The number-theoretic transform (NTT) of $u_{\mathcal{Q}_i}$ is the vector 
$\mathbf{t}_{\mathcal{Q}_i} = \mathrm{NTT}(u_{\mathcal{Q}_i}) \in \mathbb{Z}_{\mathcal{Q}_i}^{N}$ defined by
\[
\forall\, 0 \le j < N, \quad 
\mathbf{t}_{\mathcal{Q}_i}[j] = u_{\mathcal{Q}_i}(\omega^{2j+1}),
\]
where $\omega$ is a primitive $2N$-th root of unity modulo $\mathcal{Q}_i$.
\end{definition}

\begin{theorem*}[\ref{thm:compression}]
Suppose $\mathbf{u}\in\mathbb{C}^{N/2}$ satisfies $\mathbf{u}[j]=\mathbf{u}[j+S]$ for all $0\le j < \frac{N}{2} - S$, where $S\mid\frac{N}{2}$.
Let $u=\ptxt{\mathbf{u}}\in\mathcal{R}_Q$ be its CKKS encoding, and let $\mathbf{t}_{\mathcal{Q}_i}=\mathrm{NTT}(u_{\mathcal{Q}_i})\in\mathbb{Z}_{\mathcal{Q}_i}^N$ be the NTT of each limb.
Then for all $0\le j < N - 2S$, $\mathbf{t}_{\mathcal{Q}_i}[j]=\mathbf{t}_{\mathcal{Q}_i}[j+2S]$.
\end{theorem*}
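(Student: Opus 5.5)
The plan is to turn the slot periodicity into a symmetry of the polynomial $u$ itself: $u$ should be invariant under the automorphism $\sigma_k\colon u(\mathcal{X})\mapsto u(\mathcal{X}^k)$ for $k=5^S$. That symmetry then transfers to each limb and shows up as the claimed periodicity in the NTT evaluation points.

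First, I show $u(\mathcal{X}^{5^S})=u(\mathcal{X})$ in $\mathbb{C}[\mathcal{X}]/(\mathcal{X}^N+1)$.

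1. The primitive $2N$-th roots of unity are $\zeta^{\pm 5^j}$ for $0\le j<\frac{N}{2}$, because $\{\pm 5^j\}$ covers $\mathbb{Z}_{2N}^*$.
2. At $\zeta^{5^j}$, evaluating $\sigma_{5^S}u$ gives $u(\zeta^{5^{j+S}})=\mathbf{u}[(j+S)\bmod \frac{N}{2}]$. Since $5$ has order $\frac{N}{2}$ modulo $2N$, the exponent wraps around correctly.
3. By the hypothesis $\mathbf{u}[j]=\mathbf{u}[j+S]$ and $S\mid\frac N2$, this equals $\mathbf{u}[j]=u(\zeta^{5^j})$.
4. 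The CKKS encoding has real coefficients, so $u(\zeta^{-5^j})=\overline{\mathbf{u}[j]}$. The same argument therefore works at the conjugate roots.
5. Hence $u$ and $\sigma_{5^S}u$ agree at all $N$ roots of $\mathcal{X}^N+1$. Both have degree $<N$ as reduced representatives, so they are equal.

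In an actual implementation the encoding also scales by $\Delta$ and rounds coefficients. I would remark that $\sigma_k$ acts on coefficients as a signed permutation, and coordinatewise rounding commutes with signed permutations. So the invariance survives, and $u$, viewed as an integer polynomial, satisfies $u(\mathcal{X}^{5^S})=u(\mathcal{X})$ exactly. Reducing modulo $\mathcal{Q}_i$ then gives the same identity for each limb $u_{\mathcal{Q}_i}$. This bookkeeping is the step I expect to need the most care. The group theory below is routine.

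Second, iterating gives $u_{\mathcal{Q}_i}(\mathcal{X}^k)=u_{\mathcal{Q}_i}(\mathcal{X})$ for every $k$ in the subgroup $H=\langle 5^S\rangle\subseteq\mathbb{Z}_{2N}^*$. I identify $H$ as follows:

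1. $\langle 5\rangle=\{k: k\equiv 1 \pmod 4\}$ is cyclic of order $\frac N2$.
2. $H$ is therefore its unique subgroup of index $S$.
3. The set $\{k: k\equiv1 \pmod{4S}\}$ is a subgroup of $\langle 5\rangle$ of that same index; this uses $4S\mid 2N$.
4. So $H=\{k\in\mathbb{Z}_{2N}: k\equiv 1\pmod{4S}\}$.

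Finally, fix $0\le j<N-2S$ and set $k=(2j+1+4S)(2j+1)^{-1}\bmod 2N$.

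1. Since $(2j+1)k\equiv 2j+1\pmod{4S}$ and $2j+1$ is invertible modulo $4S$, we get $k\equiv1\pmod{4S}$, so $k\in H$.
2. Evaluating the identity $u_{\mathcal{Q}_i}(\mathcal{X}^k)=u_{\mathcal{Q}_i}(\mathcal{X})$ at $\omega^{2j+1}$ gives $u_{\mathcal{Q}_i}(\omega^{2j+1})=u_{\mathcal{Q}_i}(\omega^{(2j+1)k})=u_{\mathcal{Q}_i}(\omega^{2(j+2S)+1})$. This is legitimate because $\omega$ is a root of $\mathcal{X}^N+1$ in $\mathbb{Z}_{\mathcal{Q}_i}$.
3. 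Hence $\mathbf{t}_{\mathcal{Q}_i}[j]=\mathbf{t}_{\mathcal{Q}_i}[j+2S]$, as claimed in Theorem~\ref{thm:compression}.
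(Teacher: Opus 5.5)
Your proof is correct. It follows the paper's overall strategy: slot periodicity implies invariance of $u$ under an automorphism $\mathcal{X}\mapsto\mathcal{X}^k$, which then implies periodicity of the NTT evaluations. The decisive last step, however, is organized differently, and your version is the one that actually works.

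The paper asserts the existence of $K$ with $5^{KS}\equiv 4S+1 \pmod{2N}$ and derives the single identity $u(\mathcal{X})=u(\mathcal{X}^{4S+1})$. It then claims this gives $u_{\mathcal{Q}_i}(\omega^{2j+1+4S})=u_{\mathcal{Q}_i}(\omega^{2j+1})$. Evaluating that identity at $\omega^{2j+1}$ actually relates it to $\omega^{(2j+1)(4S+1)}=\omega^{2j+1+4S+8Sj}$. This is the desired point only when $8Sj\equiv 0 \pmod{2N}$; for example, $N=32$, $S=4$, $j=1$ gives $\omega^{51}$ instead of $\omega^{19}$.

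Your identification of the full invariance group $H=\langle 5^S\rangle=\{k: k\equiv 1 \pmod{4S}\}$, via uniqueness of subgroups of the cyclic group $\langle 5\rangle$, closes exactly this gap. So does your $j$-dependent choice $k=(2j+1+4S)(2j+1)^{-1}$. As a by-product you also prove the existence of $K$, which the paper only hand-waves, since $4S+1\in H$. The paper's argument could alternatively be repaired by noting that $4S+1$ generates $H$ and iterating the identity.

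You are also more careful in two places the paper glosses over:
\begin{itemize}
\item The paper checks agreement only at the $N/2$ points $\zeta^{5^j}$. That does not determine a polynomial of degree $<N$ without your conjugate-symmetry step using real coefficients.
\item The paper ignores the scaling and rounding that make $u$ an integer polynomial before reduction modulo $\mathcal{Q}_i$. Your signed-permutation argument handles this, with one small caveat: the rounding rule must be sign-symmetric, $\lfloor -x\rceil=-\lfloor x\rceil$, for it to commute with the automorphism at exact ties.
\end{itemize}
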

\begin{proof}
The CKKS encoding \(u =\ptxt{\mathbf{u}} \in \mathcal{R}_Q\) satisfies
\[
\mathbf{u}[j] = u(\zeta^{5^j}), \quad 0 \le j < N/2.
\]
By the repetition hypothesis, \(\mathbf{u}[j] = \mathbf{u}[j + S]\) for all \(0 \le j < \frac{N}{2} - S\), so
\[
u(\zeta^{5^j}) = u(\zeta^{5^{j+S}}).
\]
Thus, \(u\) satisfies the functional equation \(u(\zeta^{5^j}) = u(\zeta^{5^{j+S}})\) at all evaluation points.

As \(S \mid \frac{N}{2}\), there exists an integer \(K\) such that \(5^{KS} \equiv 4S + 1 \pmod{2N}\).
To see this, note that the powers of 5 modulo \(2N\) generate a subgroup of \((\mathbb{Z}/2N\mathbb{Z})^\times\) containing the desired residue (this follows from the properties of the cyclotomic order and the choice of 5 as a generator in CKKS~\cite{sac-2018-frns-ckks}).

Iterating the repetition \(K\) times gives
\[
u(\zeta^{5^j}) = u(\zeta^{5^{j+S}}) = u(\zeta^{5^{j+2S}}) = \cdots = u(\zeta^{5^{j+KS}}).
\]
Substituting \(5^{KS} \equiv 4S + 1 \pmod{2N}\), we obtain
\[
u(\zeta^{5^j}) = u(\zeta^{5^j \cdot 5^{KS}}) = u(\zeta^{5^j \cdot (4S + 1)}) = u((\zeta^{5^j})^{4S+1}).
\]
As this holds for all evaluation points $5^{j}$, the polynomial identity \(u(\mathcal{X}) = u(\mathcal{X}^{4S+1})\) holds in \(\mathcal{R}_Q\).

Now consider $\mathbf{t}_{\mathcal{Q}_i} = \mathrm{NTT}(u_{\mathcal{Q}_i}) \in \mathbb{Z}_{\mathcal{Q}_i}^{N}$ satisfying
\[
\mathbf{t}_{\mathcal{Q}_i}[j] = u_{\mathcal{Q}_i}(\omega^{2j+1}), \quad 0 \le j < N.
\]
For \(0 \le j < N - 2S\), compute
\[
    \mathbf{t}_{\mathcal{Q}_i}[j + 2S] = u_{\mathcal{Q}_i}(\omega^{2(j + 2S) + 1}) = u_{\mathcal{Q}_i}(\omega^{2j+1+4S}).
\]
By the polynomial identity $u_{\mathcal{Q}_i}(\mathcal{X}) = u_{\mathcal{Q}_i}(\mathcal{X}^{4S+1})$, which holds limb-wise as RNS is component-wise,
\[
 u_{\mathcal{Q}_i}(\omega^{2j+1+4S}) = u_{\mathcal{Q}_i}(\omega^{2j+1})=\mathbf{t}_{\mathcal{Q}_i}[j].
\]
Thus, $\mathbf{t}_{\mathcal{Q}_i}[j+2S] = \mathbf{t}_{\mathcal{Q}_i}[j]$ for all $0 \le j < N - 2S$.
\end{proof}

%\begin{proof}[Proof outline]
%CKKS encoding and NTT share similar properties based on Fourier transform.
%CKKS encoding is equivalent to finding a polynomial $u$ satisfying $\mathbf{u}[j]=u(\zeta^{5^j})$ for a primitive $2N$-th root of unity $\zeta=e^{\pi i / N}$~\cite{sac-2018-frns-ckks}.
%It can be proved that there always exists $K$ that satisfies $5^{KS}=4S+1 \bmod 2N$.
%As $u(\zeta^{5^j})=\mathbf{u}[j]=\mathbf{u}[j+KS]=u(\zeta^{5^{j+KS}})$ for all evaluation points, $u(\mathcal{X})=u(\mathcal{X}^{5^{KS}})=u(\mathcal{X}^{4S+1})$ holds.
%Then, from $u(\mathcal{X})=u(\mathcal{X}^{5^S})=u(\mathcal{X}^{5^{2S}})=\cdots$, $u(\mathcal{X})=u(\mathcal{X}^{4S+1})$ because there always exists $K$ such that $5^{KS}=4S+1 \bmod 2N$ (can be proved).

%Also, NTT has a property that $\mathbf{t}_{\mathcal{Q}_i}[j]=u_{\mathcal{Q}_i}(\omega^{2j+1})$ for a primitive $2N$-th root of unity $\omega\in\mathbb{Z}_{\mathcal{Q}_i}$.
%Then, $\mathbf{t}_{\mathcal{Q}_i}[j+2S]=u_{\mathcal{Q}_i}(\omega^{2j+1+4S})=u_{\mathcal{Q}_i}(\omega^{2j+1})=\mathbf{t}_{\mathcal{Q}_i}[j]$.
%\end{proof}

\section{Correctness of Intermediate ModRaise}
\label{app:int-modraise}

We show that intermediate ModRaise is correct when the following condition (Eq.~\ref{eq:intmd}) holds:
\begin{equation*}
Q_\mathrm{intmd} \ge \text{\#ptxt} \cdot B_\mathrm{ptxt} \cdot Q_\mathrm{bot} \cdot N.
\end{equation*}

%Before switching the modulus from $Q_\mathrm{bot}$ to $Q_\mathrm{intmd}$, we switch the secret to a ternary sparse secret $s$ with Hamming weight $H'$, i.e., a polynomial with $H'$ non-zero coefficients in $\{-1,1\}$~\cite{acns-2022-sparseboot}.

With intermediate ModRaise, we first raise the ciphertext modulus to $Q_\mathrm{intmd}$, perform the first CtS level, and then raise to $Q_\mathrm{top}$. 
For an input ciphertext $\ctxt{\mathbf{u}}=(a,b)\in\mathcal{R}_{Q_\mathrm{bot}}^2$, modulus raising to $Q_\mathrm{intmd}$ involves no arithmetic; we simply reinterpret $a$ and $b$ in $\mathcal{R}_{Q_\mathrm{intmd}}$.
The first CtS level then produces
\[
\ctxt{\mathbf{u}'}=
\left(
\sum_{i=0}^{\#\text{ptxt}-1} p_i \cdot a(\mathcal{X}^{5^{R_i}}),
\;
\sum_{i=0}^{\#\text{ptxt}-1} p_i \cdot b(\mathcal{X}^{5^{R_i}})
\right)
\in \mathcal{R}_{Q_\mathrm{intmd}}^2,
\]
where $p_i$ are CtS plaintexts and $R_i$ are rotation amounts.
The use of a subring secret allows skipping $\mathtt{KS}$ for the first CtS level~\cite{asia-2025-subring}.
When using a subring of degree $N'< N$, $\mathtt{KS}$ can be skipped if $\text{\#ptxt}\le\frac{N}{N'}$.

To ensure correctness, the above computation must yield identical results whether performed modulo $Q_\mathrm{intmd}$ or $Q_\mathrm{top}$.
This holds if no coefficient overflow occurs modulo $Q_\mathrm{intmd}$.

Coefficient growth is bounded as follows.
As $a,b \in \mathcal{R}_{Q_\mathrm{bot}}$, their coefficients are bounded by $\frac{Q_\mathrm{bot}}{2}$.
Rotations are coefficient permutations, so the same bound applies to $a(\mathcal{X}^{5^{R_i}})$ and $b(\mathcal{X}^{5^{R_i}})$.
Let $B_\mathrm{ptxt}$ bound the absolute value of CtS plaintext coefficients.
A polynomial multiplication accumulates at most $N$ coefficient products per output coefficient.
Therefore, each coefficient of 
$\sum_i p_i \cdot a(\mathcal{X}^{5^{R_i}})$ or $\sum_i p_i \cdot b(\mathcal{X}^{5^{R_i}})$ is bounded by
\[
\text{\#ptxt} \cdot B_\mathrm{ptxt} \cdot \frac{Q_\mathrm{bot}}{2} \cdot N.
\]

Thus, if $Q_\mathrm{intmd}$ satisfies the stated condition, all the coefficients remain strictly within the modulus range, and the computation over $Q_\mathrm{intmd}$ matches that over $Q_\mathrm{top}$.
For the actual implementation, we set $Q_\mathrm{intmd}$ slightly higher than the bound for minor extra computations, which could be required depending on the parameters.